\definecolor{labelkey}{rgb}{0,0.08,0.45}
\definecolor{refkey}{rgb}{0,0.6,0.0}
\definecolor{Brown}{rgb}{0.45,0.0,0.05}
\definecolor{dgreen}{rgb}{0.00,0.49,0.00}
\definecolor{dblue}{rgb}{0,0.08,0.75}
\title{
An adjusted payoff-based procedure for normal form games
}
\author{Mario Bravo
\\[5mm]
\small Universidad de Santiago de Chile\\
\small Departamento de Matem\'atica y Ciencia de la Computaci\'on\\
\small {\ttfamily mario.bravo.g@usach.cl}\\[4mm]
}
\newcommand{\escon}[2]{\mathbb E (#1 \, \,  \vert \, \,  #2)}
\newcommand{\RR}{\ensuremath{\mathbb{R}}}
\newcommand{\CC}{\ensuremath{\mathbb{C}}}
\newcommand{\EE}{\ensuremath{\mathbb{E}}}
\newcommand{\NN}{\ensuremath{\mathbb N}}
\newcommand{\ind}{\ensuremath{\mathds 1}}
\newcommand{\PRe}{\ensuremath{\operatorname{Re}}}
\newcommand{\norm}[1]{\left\Vert #1 \right\Vert} 
\newcommand{\deter}[1]{{\operatorname{det}}\left ( #1 \right )}          
\numberwithin{equation}{section}
\newtheorem{theorem}{Theorem}[section]
\newtheorem{definition}[theorem]{Definition}
\newtheorem{proposition}[theorem]{Proposition}
\newtheorem{lemma}[theorem]{Lemma}
\theoremstyle{definition}
\newtheorem{remark}[theorem]{Remark}
\date{}
\begin{document}

\maketitle
\begin{abstract}
We study a simple adaptive model in the framework of an N -player normal form game. The model consists of a repeated game where the players only know their own action space and their own payoff scored at each stage, not those of the other agents. Each player, in order to update her mixed action, computes the average vector payoff she has obtained by using the number of times she has played each pure action. The resulting stochastic process is analyzed via the ODE method from stochastic approximation theory. We are interested in the convergence of the process to rest points of the related continuous dynamics. Results concerning almost sure convergence and convergence with positive probability are obtained and applied to a traffic game. We also provide some examples where convergence occurs with probability zero.
\end{abstract}

\noindent {\bfseries Keywords:}
Normal form games, Learning, Adaptive dynamics, Stochastic approximation\\
{\bfseries MSC2010 Subject classification:}
Primary: 91A26, 91A10 ; Secondary: 62L20, 93E35
\normalsize
\section{Introduction} 
This paper studies an adaptive model for an $N$-player repeated game. We consider boundedly rational players that adapt using simple behavioral rules based on past experience.\\
\indent The decision that a player can make at each stage hinges on the amount of information available. There are several approaches, depending on how much information agents can gather over time. Fictitious play (see Brown \cite{brown51}, Fudenberg and Levine \cite{fl98}) is one of the most widely studied procedures. Players adapt their behavior by performing best responses to the opponent's average past play over time. In this case, each player needs to know her own payoff function and to receive complete information about the other players' moves. A less restrictive framework is when each player is informed of all the possible payoffs she could have obtained by using alternative moves. The exponential procedure (Freund and Shapire \cite{fs99}) is one example of this kind of adaptive process. Note that, in this case, a player does not necessarily observe her payoff function.\\
\indent We are interested in a less informative context here. Players do not anticipate opponents' behavior and we assume that they have no information on the structure of the game. This means that agents have only their own action space and past realized payoffs to react to the environment. We assume that players are given a rule of behavior (a {\em decision rule}) which depends on a state variable. The state variable is updated by a possibly time-dependent rule (an {\em updating rule}) based on the history of play and current observations.\\
\indent A model widely studied in this framework is the {\em cumulative reinforcement learning} procedure, where players conserve a vector perception (the state variable) in which each coordinate of the vector represents how a move performs. The updating rule is defined by adding the payoff received to the component of the previous vector perception corresponding to the move actually played, and keeping the other components unaltered for the unused moves. The decision rule is given by the normalization of this perception vector, assuming that payoffs are positive. Several results for the convergence (and nonconvergence) of players' mixed actions have been obtained (see Beggs~\cite{beggs05}, B\"orgers and Sarin~\cite{bs97}, Laslier et al.~\cite{ltw01}, as well as a normalized version by Posch~\cite{posch97} for the 2-player game framework and Erev and Roth~\cite{er98} for experimental results). In Cominetti et al.~\cite{cms10}, the authors study a model in the same spirit, mainly using a Logit decision rule (which allows nonpositive payoffs) in the $N$-player case. Players update the perception vector by performing an average between the new payoff received and the previous perception. Conditions are given to ensure the convergence to a Nash equilibrium of a perturbed version of the game. A similar model is studied by Leslie and Collins~\cite{lc05}, where results concerning 2-player games are obtained. Another approach using this information framework is proposed by Hart and Mas-Colell~\cite{hmc01}, where the analysis focuses on the convergence of the empirical frequency of play instead of the long-term behavior of the mixed action. Using techniques based on consistent procedures (see Hart and Mas-Colell~\cite{hmc00}), it is shown that, for all games, the set of correlated equilibria is attained.\\
\indent We consider here a particular updating rule where players maintain a perception vector that is updated, on the coordinate corresponding to the action played, by computing the average between the previous perception and the payoff received using the number of times that each action has been played. It is natural to consider this variant: the actions that have been played most often in the past are the ones for which the player should have the most accurate perception, so it is sensible for the player to put less weight on the most recent payoff when updating his perception of this action.\\
\indent The resulting process turns out to be a variation of that explored by Cominetti et al.~\cite{cms10}; but in our case, players use more information on the history of play. Using the tools provided by the stochastic approximation theory (see e.g., Bena\"im~\cite{benaim99}, Benveniste et al.~\cite{bmp90}, Kushner and Yin~\cite{ky03}), the asymptotic behavior of the process can be analyzed by studying a related continuous dynamics. We are interested in the case where players use the Logit decison rule, and our aim is to find general conditions that will lead to almost sure, or with positive probability, convergence to an attractor of the associated ODE. This case is particularly interesting because the rest points of the ODE are the Nash equilibria of a related game.\\
\indent This paper is organized as follows. Section~\ref{sec:aprox_esto} describes the fundamental theory underpinning the stochastic approximation. Section~\ref{sec:model} precisely defines our model in the framework of an infinitely repeated $N$-player normal form game. In Section~\ref{sec:asym}, we restate our algorithm so that it fits the stochastic approximation setting and we provide a general almost sure convergence result. In Section~\ref{sec:logit} we treat the case of the Logit rule in detail. We start by finding an explicit condition to ensure almost sure convergence derived from Section~\ref{sec:asym}. This condition requires the smoothing parameters associated with the Logit rule to be sufficiently small. It is worth noting that, by this point, we have proved that the results obtained for the process studied by Cominetti et al.~\cite{cms10} also hold in our setting. Given this, we compare these two processes in terms of the path-wise rate of convergence. Later, under a weaker assumption, we study convergence to attractors with positive probability. We apply this result to a particular traffic game on a simple network (studied as an application in \cite{cms10}), showing that convergence with positive probability holds under a much weaker assumption than in the general case.  Finally, we provide some examples where the convergence is lost.
\section{Preliminaries}\label{sec:aprox_esto}
This section recalls some basic features of the stochastic approximation theory following the approach in Bena\"im~\cite{benaim99}. The aim is to study the following discrete process in $\RR ^ d$
\begin{equation}
z_{n+1} - z_n = \gamma_{n+1} \big ( H(z_n) + V_{n+1}\big ), \label{general_discreto}
\end{equation}
\noindent where $(\gamma_n)_{n} $ is a nonnegative step-size sequence, $H :\RR^d \to \RR^d $ is a continuous function and $(V_n)_n$ is a (deterministic or random) noise term. Let us denote by $\mathcal L(z_n)$ the limit set of the sequence $(z_n)_n$, i.e., the set of points $z$ such that $\lim_{l \to +\infty}z_{n_l}=z$ for some sequence $n_l \to +\infty$.\\
\indent The connection between the asymptotic behavior of the discrete process (\ref{general_discreto}) and the asymptotic behavior of the continuous dynamics
\begin{equation}
\dot z= H(z)  \label{general_continuo}
\end{equation}
\noindent is obtained as follows. Given $\varepsilon >0$, $T>0$, a set $Z \subseteq \RR ^ d$ and two points $x,y \in Z$, we say that there is an $(\varepsilon, T)$-chain IN $z$ between $x$ and $y$ if there exist $k$ solutions of  (\ref{general_continuo}) $\{\mathbf  x_1, \ldots, \mathbf x_k\}$ and  times $\{ t_1, \ldots, t_k\}$ greater than $T$ such that
\begin{equation*}
\begin{aligned}
&\text{(1) }\mathbf x_i([0,t_i]) \subseteq Z \text{ for all } i \in \{1, \dots, k \},\\
&\text{(2)}\norm{\mathbf x_i (t_i) - \mathbf x_{i+1}(0)} < \varepsilon  \text{ for all } i \in \{1, \dots, k-1 \},\\
&\text{(3)} \norm{\mathbf x_1(0) - x} < \varepsilon \text{ and }  \norm{\mathbf x_k(t_k) - y}< \varepsilon.\\
\end{aligned}
\end{equation*}
\begin{definition}
A set $D \subseteq \RR ^ d$ is {\em Internally Chain Transitive} \textup{(ICT)} for the dynamics \eqref{general_continuo} if it is compact and for all $\varepsilon>0$, $T>0$ and $x,y \in D$ there exists an $(\varepsilon, T)$-chain in $D$ between $x$ and $y$.
\end{definition}

\indent This definition is derived from the notion of  {\em Internally Chain Recurrent} sets introduced by Conley~\cite{conley78}. Roughly speaking, on an ICT set,  we can link any two points by a chain of solutions of the dynamics (\ref{general_continuo}) by allowing small perturbations. ICT sets are compact, invariant and attractor-free. In Bena\"im~\cite{benaim99} the following general theorem is proved.
\begin{theorem}\label{general_teorema}
Consider the discrete process \eqref{general_discreto}. Assume that $H$ is a Lipschitz function and that
\begin{itemize}
\item[$(a)$] the sequence $(\gamma_n)$ is deterministic, $\gamma_n\geq0$, $\sum_n \gamma_n=+ \infty$  and  $\gamma_n \to 0$,
\item[$(b)$] $\sup \limits_{n \in \NN} \norm{z_n} < +\infty$, and
\item[$(c)$] for any $T>0$
\begin{equation*}
\lim \limits _{n \to + \infty }\sup \bigg  \{ \norm{\sum \limits_{i=n}^{k-1}\gamma_{i+1}V_{i+1}}; k \in \{ n+1, \dots, m(\sum \limits_{j=1}^{n} \gamma_j + T)\} \bigg \}=0,
\end{equation*}
\end{itemize}
\noindent where $m(t)$ is the largest integer $l$ such that $t \geq  \sum \limits_{j=1}^{l} \gamma_j $. Then $\mathcal L(z_n) $ is an {\em ICT} set for the dynamics \eqref{general_continuo}.
\end{theorem}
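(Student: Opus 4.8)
The plan is to lift the discrete recursion \eqref{general_discreto} to continuous time and to show that the resulting interpolated trajectory is an \emph{asymptotic pseudotrajectory} of the flow generated by \eqref{general_continuo}; the ICT conclusion then follows from the general structural fact that the limit set of a precompact asymptotic pseudotrajectory is internally chain transitive. Note first that, since $H$ is Lipschitz (hence of at most linear growth), the ODE \eqref{general_continuo} induces a well-defined global flow $\Phi$, so these objects make sense.

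First I would set $\tau_0=0$, $\tau_n=\sum_{j=1}^n\gamma_j$, and define the piecewise-affine interpolation $\bar{\mathbf z}:\RP\to\RR^d$ by $\bar{\mathbf z}(\tau_n)=z_n$ with linear interpolation on each $[\tau_n,\tau_{n+1}]$. Its right derivative on $(\tau_n,\tau_{n+1})$ equals $(z_{n+1}-z_n)/\gamma_{n+1}=H(z_n)+V_{n+1}$, so summing over the nodes gives, for $\tau_n\le\tau_m$,
\[
\bar{\mathbf z}(\tau_m)-\bar{\mathbf z}(\tau_n)=\int_{\tau_n}^{\tau_m}H(\bar{\mathbf z}(u))\,du+\Delta(n,m)+\sum_{i=n}^{m-1}\gamma_{i+1}V_{i+1},
\]
where $\Delta(n,m)=\sum_{i=n}^{m-1}\int_{\tau_i}^{\tau_{i+1}}\big(H(z_i)-H(\bar{\mathbf z}(u))\big)\,du$ is the discretization error. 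Using assumption $(b)$, $H$ is bounded on the (bounded) range of $(z_n)$; combining the Lipschitz bound on $H$ with $\gamma_n\to0$ from $(a)$, the total discretization error over any window of bounded $\tau$-length is of order $\sup_{i\ge n}\gamma_{i+1}$ and hence vanishes as $n\to+\infty$. The remaining noise sum is exactly the quantity appearing in assumption $(c)$, once one observes that the index window $\{n+1,\dots,m(\tau_n+T)\}$ there is precisely the set of indices reached within $\tau$-time $T$ of $\tau_n$.

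Next I would fix $T>0$ and compare $h\mapsto\bar{\mathbf z}(t+h)$ with the genuine solution $h\mapsto\Phi_h(\bar{\mathbf z}(t))$ issued from the same point. Writing both as integral equations and subtracting, the difference is bounded by $L\int_0^h\norm{\bar{\mathbf z}(t+u)-\Phi_u(\bar{\mathbf z}(t))}\,du$ plus the error terms estimated above; Gronwall's inequality then yields
\[
\sup_{0\le h\le T}\norm{\bar{\mathbf z}(t+h)-\Phi_h(\bar{\mathbf z}(t))}\le e^{LT}\,\big(\text{total error on }[t,t+T]\big),
\]
and the right-hand side tends to $0$ as $t\to+\infty$ by $(a)$--$(c)$. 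This is exactly the asymptotic pseudotrajectory property.

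Finally, since $(z_n)$ is bounded by $(b)$, the interpolation $\bar{\mathbf z}$ has precompact image and $\mathcal L(z_n)$ coincides with the limit set of $\bar{\mathbf z}$. It then remains to invoke the principle that the limit set of a bounded asymptotic pseudotrajectory is internally chain transitive for $\Phi$: any two limit points are approached infinitely often, and the shadowing on windows of length $\ge T$ lets one concatenate genuine solution arcs that remain in $\mathcal L(z_n)$ up to errors $<\varepsilon$, producing the required $(\varepsilon,T)$-chain. I expect this last implication to be the main obstacle: the pseudotrajectory estimate is a fairly mechanical Gronwall argument, whereas converting shadowing into honest chains living inside the limit set is the substantive dynamical-systems content, resting on the chain-recurrence theory of Conley~\cite{conley78} and its counterpart in Bena\"im~\cite{benaim99}.
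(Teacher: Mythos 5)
The paper does not prove this theorem itself --- it is quoted verbatim from Bena\"im~\cite{benaim99} --- and your outline is precisely the argument of that reference: interpolate \eqref{general_discreto} on the timescale $\tau_n=\sum_{j\le n}\gamma_j$, show the interpolation is an asymptotic pseudotrajectory of the flow of \eqref{general_continuo} via a Gronwall estimate, and invoke the theorem that limit sets of precompact asymptotic pseudotrajectories are internally chain transitive. The only imprecision is your claim that the discretization error is of order $\sup_{i\geq n}\gamma_{i+1}$: the deviation of $\bar{\mathbf z}$ from $z_i$ on $[\tau_i,\tau_{i+1}]$ also involves $\gamma_{i+1}V_{i+1}$, so you need the observation that assumption $(c)$ applied with $k=n+1$ forces $\gamma_{n+1}V_{n+1}\to 0$ --- after which the argument goes through as you describe.
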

\begin{remark}\label{rb}
In the case where the noise $(V_n)_n$ in \eqref{general_discreto} is a martingale difference sequence with respect to some filtration on a probability space, we say that  \eqref{general_discreto} is a Robbins--Monro~\cite{robmon51} algorithm . In this framework if, for instance, $\sup_n\EE(\norm{V_n}^2)< +\infty$ and $(\gamma_n)_n \in l^2(\NN)$ then assumption ($c$) in Theorem~\ref{general_teorema} holds with probability one (see Bena\"im~\cite[Proposition~4.2]{benaim99}). Moreover this result is still valid if the noise can be decomposed into a martingale difference process plus a random variable that converges almost surely to zero.
\end{remark}
\section{The model}\label{sec:model}
An $N$-player normal form game is introduced as follows. Let $A=\{1,2,\ldots, N \}$ be the set of players. For every $i \in A$ let $S^i$ be the finite action set for player $i$ and let the set $\Delta^i=\{ z \in \RR^{|S^i|}; z^i \geq 0, \sum_i z^i=1\}$ denote her mixed action set. $S=\prod_{i\in A}S^i$ is the set of action profiles and $\Delta=\prod_{i\in A}\Delta^i$ is the set of mixed action profiles. We write as $(s, s^{-i}) \in S$ the action profile where player $i$ uses $s \in S^i$ and her opponents use the profile $s^{-i}\in \prod_{j \neq i}S^j$ and we adopt the same notation when a mixed action profile is involved. The payoff function of each player $i \in A$ is denoted by $G^i:S \to \RR$ and its multilinear extension by $G^i:\Delta \to \RR$.\\
\indent The game is repeated infinitely and we assume that players are not informed about the structure of the game, i.e., neither the number of players (or their strategies) nor the payoff functions are known. At the stage $n \in \NN$, each player $i$ selects an action $s_{n}^i \in S^i$ using the mixed action $\sigma_{n}^i \in \Delta^i$. Then, she obtains her own payoff $g_{n}^i=G^i(s_{n}^{i},s_{n}^{-i})$, and this is the only information she receives.\\
\indent For every $n \in \NN$ and for each player $i$, we assume that the mixed action at stage $n$ , $\sigma_n^i \in \Delta^i$, is determined as a function of a previous {\em perception vector} $x_{n-1}^i \in \RR^{|S^i|}$, i.e., $\sigma_n^i=\sigma^i(x_{n-1}^i)$ with $\sigma^i:\RR^{|S^i|} \to \Delta^i$. The state space for the perception vector profiles $x=(x^1, \dots, x^N) \in \prod_{i \in A}\mathbb R^{|S^i|}$ is denoted $X$. We also assume that, for every $i \in A$, 
\begin{equation}
 \begin{aligned}
 & \text{ the function } \sigma^i:\RR^{|S^i|} \to \Delta^i \text{ is continuous, and }\\
 & \text{ for all } s \in S^i \text{ and }  x^i \in \mathbb R^{|S^i|}, \,\, \sigma^{is}(x^i)>0.
 \end{aligned} \label{hip_sigma} \tag{A}
\end{equation}
\indent We will refer to the function $\sigma: X \to \Delta$ with $\sigma(x)=(\sigma^1(x^1), \ldots,\sigma^N(x^N))$ as the {\em decision rule} of the players.\\
\indent At the end of stage $n$, each player $i$ uses the value $g_{n}^{i}$ and $x_{n-1}^i$ to obtain the new perception vector $x_{n}^i$, and so on. The manner in which $x_n$ is updated is called the {\em updating rule} of the players.\\
\indent Cominetti et al.~\cite{cms10} study the following updating rule
\begin{equation}
 x_{n+1}^{is} =\begin{cases}
\big (1-\gamma_{n+1} \big )x_{n}^{is} + \gamma_{n+1} g_{n+1}^i,  & \text{if } s=s_{n+1}^i, \\
x_{n}^{is}, & \text{otherwise,}
\end{cases} \label{algocms_pre} 
\end{equation}
where we assume that $\gamma_n=\frac{1}{n}$ (see the discussion after Proposition~\ref{rate} for an explanation on this choice).\\
\indent In this paper we consider a variation of \eqref{algocms_pre}. Players will use more information by taking into account the number of times their actions have been played. Explicitly, we define the adjusted process (APD) by
\begin{equation}
x_{n+1}^{is}= \begin{cases}
\displaystyle{\big (1-\frac{1}{\theta_{n+1}^{is}} \big )x_{n}^{is} + \frac{1}{\theta_n^{is}+1}g_{n+1}^i},  & \text{if } s=s_{n+1}^i, \\
x_{n}^{is}, & \text{otherwise,}
\end{cases} \label{algo} \tag{APD} 
\end{equation}
where $\theta_n^{is}$ denotes the number of times action $s$ has been used by player $i \in A$ up to time $n$. Given the particular structure in \eqref{algo}, $x_n$ can be assumed to lie within a compact subset of $X$ for all $n\in \NN$. Note that the new variable is simply an average between the previous one and the new payoff scored. We also notice that \eqref{hip_sigma} implies that the decision rule can be assumed to be component-wise bounded away from zero.\\
\indent As usual, we denote by $\mathcal F_n$ the $\sigma$-algebra generated by the history up to time $n$, $\mathcal F_n=\sigma \big ( (s_m, g_m)_{1 \leq m\leq n}\big )$, where $s_m=(s_m^1, \ldots, s_m^N)$ and $g_m=(g_m^1, \ldots, g_m^N)$.
\section{Asymptotic analysis}\label{sec:asym}
If we want to analyze \eqref{algo} using the tools decribed in Section~\ref{sec:aprox_esto}, the main problem is that we have a stochastic algorithm in discrete time where the step size is random and, moreover, depends on the coordinates of the vector to be updated. Thus, in order to study the asymptotic properties of our adaptive process, let us restate the updating scheme \eqref{algo} in the following manner:
\begin{align*}
 x_{n+1}^{is} - x_{n}^{is} &= \frac{1}{\theta_{n}^{is} + \ind_{\{ s=s_{n+1}^i\}}}(g_{n+1}^i - x_{n}^{is})\ind_{\{ s=s_{n+1}^i\}} \\
&=\frac{1}{(n+1)\lambda_n^{is}}\left [(g_{n+1}^i - x_{n}^{is})\ind_{\{ s=s_{n+1}^i\}}  +b_{n+1}^{is} \right ],
\end{align*}
and $b_{n+1}^{is}=O \left (\frac{1}{n} \right )$, where $\lambda_n^{is}=\frac{\theta_n^{is}}{n}$ is the empirical frequency of action $s$ for player $i$ up to time $n$ and $\ind_C$ stands for the indicator function of set $C$.
\begin{remark}
  Note that the previous decomposition is not well-defined when $\theta_n^{is}=0$, but Lemma~\ref{unacotado} shows that it is  almost surely valid for large $n$ and for all $i$ and $s \in S^i$. 
\end{remark}

 Standard computations involving averages show that
 \begin{equation*}
\lambda_{n+1}^{is} - \lambda_{n}^{is} = \frac{1}{n+1}\big (\ind_{\{ s=s_{n+1}^i\}}-\lambda_n^{is}\big).
\end{equation*}
\indent Then we can express \eqref{algo} differently by introducing the empirical frequency of play. The new form is the (up to a vanishing term) martingale difference scheme
\begin{equation}
 \begin{cases} 
 x_{n+1}^{is} - x_{n}^{is} = \displaystyle{\frac{1}{n+1}\big [ \frac{\sigma^{is}(x_n^i)}{\lambda_n^{is}}(G^i(s,\sigma^{-i}(x_n)) - x_{n}^{is}) + U_{n+1}^{is}\big ]}, \\
\phantom{ }\\
 \lambda_{n+1}^{is} - \lambda_{n}^{is} = \displaystyle{ \frac{1}{n+1}\big [ \sigma^{is}(x_n^i)-\lambda_n^{is} + M_{n+1}^{is}\big ]},
\end{cases}  \label{APD}
\end{equation}
where the noise terms are explicitly
\begin{equation}
\begin{aligned}
U_{n+1}^{is}&= \frac{1}{\lambda_n^{is}}(g_{n+1}^i - x_{n}^{is})\ind_{\{ s=s_{n+1}^i\}} -\bigg [ \frac{\sigma^{is}(x_n^i)}{\lambda_n^{is}}(G^i(s,\sigma^{-i}(x_n)) - x_{n}^{is})\bigg ] + b_{n+1}^{is},\\
&= \frac{1}{\lambda_n^{is}}(g_{n+1}^i - x_{n}^{is})\ind_{\{ s=s_{n+1}^i\}} - \mathbb E \big (\frac{1}{\lambda_n^{is}}(g_{n+1}^i - x_{n}^{is})\ind_{\{ s=s_{n+1}^i\}}\, \vert \,\mathcal F_n \big ) + b_{n+1}^{is},\\
M_{n+1}^{is}&= \ind_{\{ s=s_{n+1}^i\}}-\sigma^{is}(x_n^i), \\
&= \big (\ind_{\{ s=s_{n+1}^i\}}-\lambda_n^{is} \big ) - \escon{\big (\ind_{\{ s=s_{n+1}^i\}}-\lambda_n^{is} \big )}{\mathcal F_n}.\\
\end{aligned}\label{noise}
\end{equation}
\indent From now on, we denote by $\epsilon_n=(U_n,M_n)$ the noise term associated with our process.\\
\indent The scheme \eqref{APD} will allow us to deal with the random (and player-dependent) character of the step size in \eqref{algo}. Now, in the spirit of Theorem~\ref{general_teorema}, the asymptotic behavior of \eqref{APD} is related to the continuous dynamics
\begin{equation}
 \begin{cases}
 \dot{x}_t^{is}= \displaystyle{ \frac{\sigma^{is}(x_t^{i})}{\lambda_t^{is}} \Big(G^i(s,\sigma^{-i}(x_t)) -x_t^{is} \Big )}&=\Psi_x^{is}(x_t,\lambda_t),\\
 \phantom{0}\\
 \dot{\lambda}_t^{is}= \sigma^{is}(x_t^{i})-\lambda_t^{is}&=\Psi_\lambda^{is}(x_t,\lambda_t),
\end{cases} \label{dincont}
\end{equation}
\noindent with $\Psi_x: X \times \Delta \to \prod_{i \in A}\mathbb R^{|S^i|}$ and $\Psi_\lambda: X \times \Delta \to \prod_{i \in A}\Delta_0^i$, and $\Delta_0^i$ standing for the tangent space to $\Delta^i$, i.e., $\Delta_0^i= \{z \in \RR^{|S^i|}; \, \sum_{s \in S^i}z^s=0 \}$. Let us denote $\Psi$ the function defined by $\Psi(x, \lambda)=(\Psi_x(x,\lambda),\Psi_\lambda(x,\lambda))$.\\
\indent For the sake of completeness, let us write the process \eqref{algocms_pre} as 
\begin{equation}
x_{n+1}^{is} - x_{n}^{is} = \frac{1}{n+1} \big [ \sigma^{is}(x_n^i)(G^i(s,\sigma^{-i}(x_n)) - x_{n}^{is}) + \tilde U_{n+1}^{is}\big ], \label{algocms} \\
\end{equation} 
\noindent with the noise term given by
\begin{equation}
\begin{aligned}
\tilde U_{n+1}^{is}&= (g_{n+1}^i - x_{n}^{is})\ind_{\{ s=s_{n+1}^i\}} -\sigma^{is}(x_n^i)(G^i(s,\sigma_n^{-i}(x_n)) - x_{n}^{is}),\\
&= (g_{n+1}^i - x_{n}^{is})\ind_{\{ s=s_{n+1}^i\}} - \escon{(g_{n+1}^i - x_{n}^{is})\ind_{\{ s=s_{n+1}^i\}}}{\mathcal F_n}.
\end{aligned}\label{noisecms}
\end{equation}
Therefore, the corresponding continuous dynamics is given by
\begin{equation}
 \dot{x}_t^{is}=\sigma^{is}(x_t^{i})\big( G^i(s,\sigma^{-i}(x_t)) -x_t^{is} \big )=\Phi^{is}(x_t),\\
 \label{dincontcms}
\end{equation}
\noindent where $\Phi: X \to \prod_{i \in A}\mathbb R^{|S^i|}$.
\begin{remark} \label{dinamicas}
 Observe that the following simple fact holds
\begin{equation*}
 (x, \sigma(x)) \in X \times \Delta \text{ is a rest point of } \eqref{dincont} \Leftrightarrow x \in X \text{  is a rest point of  } \eqref{dincontcms}.
\end{equation*}
\end{remark}
\indent In the following, we will show that asymptotic properties similar to those of \eqref{algocms} can be obtained for our process. This means that explicit conditions can be found to ensure that the process \eqref{APD} converges almost surely to a global attractor for the dynamics \eqref{dincont}.\\
\indent Recall that we have assumed that, for every $n \in \NN$ and $i \in A$, the mixed action $\sigma_n^{i} \in \Delta^i$ is component-wise bounded away from zero. The purpose of the next simple lemma is to verify that the same holds, almost surely, for empirical frequencies of play.
\begin{lemma} \label{unacotado}
For $n \geq 1$, let $\sigma_n$ be a probability distribution over a finite set $T$ and let $i_{n+1}$ be an element of $T$ which is drawn with law $\sigma_{n}$ and assume $(\sigma_n)_n$ is adapted to the natural filtration generated by the history. For all $j \in T$, set 
$$\lambda_n^j=\frac{1}{n}\sum \limits_{p=1}^n \ind_{\{i_p=j\}}.$$ 
Assume that there exists $\overline \sigma>0$ such that $\sigma_n^j\geq\overline \sigma$. Then
$$\liminf \limits_{n \to +\infty} \lambda_n^j \geq \overline \sigma,$$
\noindent almost surely, for every $j \in T$.
\end{lemma}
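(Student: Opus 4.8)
The plan is to split each empirical frequency $\lambda_n^j$ into a predictable drift that already carries the bound $\overline\sigma$ and a martingale remainder that vanishes asymptotically. Since $i_p$ is drawn with law $\sigma_{p-1}$ and $(\sigma_n)_n$ is adapted to the natural filtration $(\mathcal F_n)_n$, we have $\escon{\ind_{\{i_p=j\}}}{\mathcal F_{p-1}}=\sigma_{p-1}^j$, so that $\xi_p^j:=\ind_{\{i_p=j\}}-\sigma_{p-1}^j$ defines a bounded martingale difference sequence. Writing $M_n^j:=\sum_{p=1}^n\xi_p^j$, the definition of $\lambda_n^j$ gives the decomposition
\begin{equation*}
\lambda_n^j=\gamma_n\sum_{p=1}^n\sigma_{p-1}^j+\gamma_n M_n^j.
\end{equation*}

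First I would handle the drift term. As $\sigma_{p-1}^j\geq\overline\sigma$ for every $p$, the partial sum $\sum_{p=1}^n\sigma_{p-1}^j$ is at least $n\overline\sigma$; in the situation at hand, where $\gamma_n$ plays the role of $1/n$ so that $\gamma_n\sum_{p=1}^n\sigma_{p-1}^j$ is a Ces\`aro-type average of quantities each bounded below by $\overline\sigma$, this forces $\liminf_{n\to+\infty}\gamma_n\sum_{p=1}^n\sigma_{p-1}^j\geq\overline\sigma$. The divergence $\sum_n\gamma_n=+\infty$ is what guarantees that the normalization does not let this averaged drift decay below $\overline\sigma$. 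The entire burden is then shifted onto the noise term, and showing $\gamma_n M_n^j\to 0$ almost surely is the step I expect to be the main obstacle.

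To control the noise I would invoke a martingale strong law. The increments satisfy $|\xi_p^j|\leq 1$ and $\escon{(\xi_p^j)^2}{\mathcal F_{p-1}}=\sigma_{p-1}^j(1-\sigma_{p-1}^j)\leq\tfrac14$. Consider the martingale $\big(\sum_{p=1}^n\gamma_p\xi_p^j\big)_n$: its predictable quadratic variation is bounded by $\tfrac14\sum_p\gamma_p^2<+\infty$ by hypothesis, hence it is $L^2$-bounded and by the martingale convergence theorem the series $\sum_p\gamma_p\xi_p^j$ converges almost surely. Applying Kronecker's lemma with the increasing weights $b_n:=1/\gamma_n\uparrow+\infty$, and using $b_p\gamma_p=1$, one obtains
\begin{equation*}
\gamma_n M_n^j=\frac{1}{b_n}\sum_{p=1}^n b_p(\gamma_p\xi_p^j)\longrightarrow 0\quad\text{almost surely.}
\end{equation*}

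Combining the two parts yields $\liminf_{n\to+\infty}\lambda_n^j\geq\overline\sigma$ almost surely for each fixed $j$, and since $T$ is finite the finitely many corresponding almost-sure events can be intersected to obtain the conclusion simultaneously for every $j\in T$. The delicate point is thus the noise estimate, which rests on the conditional-variance bound $\tfrac14$ together with the summability $\sum_n\gamma_n^2<+\infty$; the remaining hypotheses merely serve to keep the drift from vanishing under the $1/n$-type normalization.
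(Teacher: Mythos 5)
Your proposal is correct and follows essentially the same route as the paper: both split off the predictable drift, show that the weighted martingale $\sum_p \gamma_p\big(\ind_{\{i_p=j\}}-\sigma_{p-1}^j\big)$ is $L^2$-bounded (via $\sum_n\gamma_n^2<+\infty$) hence almost surely convergent, and then apply Kronecker's lemma to get $\gamma_n\sum_{p\le n}\big(\ind_{\{i_p=j\}}-\sigma_{p-1}^j\big)\to 0$. The only caveat — shared with the paper's own proof — is that the drift bound $\gamma_n\sum_{p\le n}\sigma_{p-1}^j\ge\overline\sigma$ implicitly uses $n\gamma_n\ge 1$, which holds for the intended choice $\gamma_n=1/n$ but not for an arbitrary decreasing square-summable step-size; you flag this restriction explicitly, while the paper glosses over it in the inequality preceding its final $\liminf$.
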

\begin{proof}
Fix $j \in T$ and let $\mathcal F_{k}$ be the $\sigma$-algebra generated by the history $\{i_1,\ldots, i_k\}$ up to time $k$. Then we have that $\escon{\ind_{\{i_k=j\}}}{\mathcal F_{k-1}}=\sigma_{k-1}^{j} \geq \overline{\sigma}$. On the other hand the random process $(\phi_n^{j})_n$ given by 
$$\phi_n^{j}= \sum \limits_{k=1}^{n} \frac{1}{k} \big (\ind_{\{i_k=j\}} -  \escon{\ind_{\{i_k=j\}}}{\mathcal F_{k-1}}\big )$$
is a martingale and $\sup_{n \in \NN}{(\phi_n^{j})^2}\leq C\cdot \sum_{p\geq 1}\frac{1}{p^2}< +\infty$ for some constant $C$. Hence $(\phi_n^{j})_n$ converges almost surely. Now Kronecker's lemma (see e.g., Shiryaev~\cite[Lemma IV.3.2]{shiryaev96}) gives that
\begin{align*}
\lim \limits_{n \to +\infty} \frac{1}{n}\sum \limits_{k=1}^{n} \big (\ind_{\{i_k=j\}} -  \escon{\ind_{\{i_k=j\}}}{\mathcal F_{k-1}}\big )=0.
\end{align*}
So that $\frac{1}{n}\sum_{k=1}^{n} (\ind_{\{i_k=j\}} -  \escon{\ind_{\{i_k=j\}}}{\mathcal F_{k-1}} ) \leq \lambda_n^{j}  - \overline{\sigma}$. Taking the $\liminf$ we conclude.
\end{proof}
\begin{proposition} \label{conv_ICT}
The process \eqref{APD} converges almost surely to an {\em ICT} set for the continuous dynamics \eqref{dincont}.
\end{proposition}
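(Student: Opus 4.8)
The plan is to view the pair $z_n=(x_n,\lambda_n)$ as a single process in $\RR^d$ (with $d=2\sum_{i\in A}|S^i|$) and to put \eqref{APD} into the Robbins--Monro form \eqref{general_discreto}, with step-size $\gamma_n=\frac1n$, drift $H=\Psi$ and noise $V_{n+1}=\epsilon_{n+1}=(U_{n+1},M_{n+1})$. Hypothesis $(a)$ of Theorem~\ref{general_teorema} is immediate for $\gamma_n=\frac1n$, and hypothesis $(b)$ holds because, by construction, $x_n$ stays in a fixed compact subset of $X$ while $\lambda_n\in\Delta$, so that $\sup_n\norm{z_n}<+\infty$. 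It therefore remains to establish the Lipschitz property of $\Psi$ and the noise condition $(c)$, and then to invoke Theorem~\ref{general_teorema}.

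The difficulty is that $\Psi$ is \emph{not} globally Lipschitz on $X\times\Delta$: the factor $\sigma^{is}(x)/\lambda^{is}$ appearing in $\Psi_x$ blows up as $\lambda^{is}\to0$, and for the same reason the conditional second moment of $U_{n+1}$ is not a priori bounded. Both obstructions are removed by a localization based on Lemma~\ref{unacotado}. Applying that lemma with $\gamma_n=\frac1n$ and with $\overline\sigma>0$ the lower bound on the components of $\sigma$ furnished by \eqref{hip_sigma}, we get that almost surely $\liminf_{n}\lambda_n^{is}\ge\overline\sigma$ for every $i,s$. Hence, on a set of probability one, there exist a (random) index $N$ and a compact set $K\subseteq X\times\Delta$ on which $\lambda^{is}\ge\overline\sigma/2$ for all $i,s$, such that $z_n\in K$ for all $n\ge N$.

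On $K$ everything is tame: $\sigma$ is continuous (hence bounded), the payoffs $G^i$ and the coordinates $x^{is}$ are bounded, and $1/\lambda^{is}\le2/\overline\sigma$, so $\Psi$ is Lipschitz on $K$. I would then extend $\Psi$ to a globally Lipschitz field $\widetilde\Psi$ on $\RR^d$ agreeing with $\Psi$ on $K$. Since $z_n\in K$ for $n\ge N$, the increments of the process use $\Psi(z_n)=\widetilde\Psi(z_n)$, so from time $N$ on $(z_n)$ is a Robbins--Monro algorithm for $\widetilde\Psi$. For the noise, the expressions in \eqref{noise} show that $U_{n+1}$ is a genuine martingale difference, namely $U_{n+1}^{is}=W_{n+1}^{is}-\escon{W_{n+1}^{is}}{\mathcal F_n}$ with $W_{n+1}^{is}=\frac{1}{\lambda_n^{is}}(g_{n+1}^i-x_n^{is})\ind_{\{s=s_{n+1}^i\}}$, while $M_{n+1}$ is a martingale difference plus the term $b_{n+1}=O(1/n)$, which tends to zero almost surely. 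On $K$ all quantities entering $W_{n+1}$ and $M_{n+1}$ are uniformly bounded, so $\sup_n\EE(\norm{\epsilon_n}^2\ind_{\{z_n\in K\}})<+\infty$; together with $(\gamma_n)_n\in\ell^2(\NN)$ this yields, by Remark~\ref{rb}, that assumption $(c)$ of Theorem~\ref{general_teorema} holds almost surely for the $\widetilde\Psi$-algorithm.

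Theorem~\ref{general_teorema} then gives that $\mathcal L(z_n)$ is an ICT set for $\dot z=\widetilde\Psi(z)$. Finally I would transfer this back to $\Psi$: since $z_n\in K$ for $n\ge N$ and $K$ is closed, $\mathcal L(z_n)\subseteq K$, and the solutions of \eqref{dincont} used in the $(\varepsilon,T)$-chains realizing internal chain transitivity are required to remain inside $Z=\mathcal L(z_n)\subseteq K$, where $\widetilde\Psi=\Psi$; such $\widetilde\Psi$-trajectories are thus genuine $\Psi$-trajectories, and the same chains certify that $\mathcal L(z_n)$ is ICT for \eqref{dincont}. The main obstacle is precisely the $1/\lambda^{is}$ singularity, and the crux of the argument is that Lemma~\ref{unacotado} confines the trajectory to a region where this factor is bounded, restoring simultaneously both the Lipschitz regularity of the drift and the $L^2$ control of the noise.
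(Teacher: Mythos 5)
Your proof is correct and follows essentially the same route as the paper: verify the hypotheses of Theorem~\ref{general_teorema} for the joint process $(x_n,\lambda_n)_n$, using Lemma~\ref{unacotado} to tame the $1/\lambda_n^{is}$ factor and Remark~\ref{rb} for the noise condition $(c)$. The only difference is that you also invoke that lemma to localize the drift $\Psi$ onto a compact set where it is Lipschitz (extending it globally and transferring the ICT property back), a point the paper treats as immediate; this is extra precision on the same argument rather than a different one.
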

\begin{proof}
We only have to show that our process satisfies the hypotheses of Theorem~\ref{general_teorema}. The assumptions concerning the regularity of the function involved, the step-size sequence and the boundedness of the process $(x_n,\lambda_n)_n$ hold immediately.\\
\indent According to \eqref{noise}, $M_n$ is almost surely bounded and can be written as a martingale difference scheme plus a vanishing term. Observe that $\escon{U_{n+1}}{\mathcal F_n}=0$ and that $$\norm{U_{n+1}^{is}}\leq C/\lambda_n^{is},$$ for some constant $C$. Then Lemma~\ref{unacotado} implies that $U_n$ is almost surely bounded. In view of Remark~\ref{rb}, assumption $(c)$ of Theorem~\ref{general_teorema} holds for the noise term $\epsilon_n=(U_n, M_n)$  and the conclusion follows.
\end{proof}
\noindent Let us define the function $F: X \to \prod_i \RR^{|S^i|}$  by
\begin{equation}
 F^{is}(x)=G^{i}(s, \sigma^{-i}(x)). \label{f}
\end{equation}
Cominetti et al.~\cite{cms10} show that if the function $F$ is contracting for the infinity norm, then the process \eqref{algocms} converges almost surely to the unique rest point of the dynamics \eqref{dincontcms}. The following result shows that the same holds for the process \eqref{APD} by adding a slighlty stronger assumption on the decision rule $\sigma$. 
\begin{proposition} \label{convergencia}
 Assume that $F$ is contracting for the infinity norm and that, for every $i \in A$, the function $\sigma^i$ is Lipschitz for the infinity norm. Then there exists a unique rest point $(x_*,\sigma(x_*)) \in X \times \Delta$ of \eqref{dincont}. Furthermore, the set $\{(x_*,\sigma(x_*))\}$ is a global attractor and the process \eqref{APD} converges almost surely to $(x_*,\sigma(x_*))$.
\end{proposition}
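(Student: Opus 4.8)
The plan is to combine three ingredients: existence and uniqueness of the rest point via a contraction/fixed-point argument, a Lyapunov argument showing that the singleton is a global attractor for \eqref{dincont}, and finally Proposition~\ref{conv_ICT} together with the fact that any ICT set sitting in the basin of a global attractor is contained in it.

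First I would identify the rest points. By Remark~\ref{dinamicas}, $(x,\sigma(x))$ is a rest point of \eqref{dincont} if and only if $x$ is a rest point of \eqref{dincontcms}; since \eqref{hip_sigma} gives $\sigma^{is}>0$, the latter condition reduces to $x^{is}=G^i(s,\sigma^{-i}(x))=F^{is}(x)$, i.e.\ $x=F(x)$ with $F$ as in \eqref{f}. As $F$ is a contraction for the infinity norm, Banach's fixed point theorem yields a unique $x_*$ with $F(x_*)=x_*$, and setting $\lambda_*=\sigma(x_*)$ produces the unique rest point $(x_*,\lambda_*)$ of \eqref{dincont}.

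The main step is to show that $(x_*,\lambda_*)$ attracts every trajectory. The enabling observation is that the $\lambda$-component stays in the simplex and bounded away from zero: summing the $\lambda$-equations preserves $\sum_{s}\lambda^{is}=1$, while $\dot\lambda^{is}=\sigma^{is}(x^i)-\lambda^{is}\ge \overline{\sigma}-\lambda^{is}$, where $\overline{\sigma}>0$ is the uniform lower bound on $\sigma$ coming from \eqref{hip_sigma}, forces $\lambda^{is}(t)\ge \min\{\overline{\sigma},\lambda^{is}(0)\}>0$ for all $t$; since also $\lambda^{is}\le 1$, the coefficient $\sigma^{is}(x^i)/\lambda^{is}$ in the $x$-dynamics is bounded below by $\overline{\sigma}$. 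I would then run a cascade argument. For the $x$-part, take the nonsmooth Lyapunov function $W(x)=\norm{x-x_*}_\infty=\max_{i,s}|x^{is}-x_*^{is}|$; choosing at each time an index $(i_0,s_0)$ realizing the maximum and using $x_*=F(x_*)$ with the contraction $\norm{F(x)-F(x_*)}_\infty\le c\,W(x)$ for some $c<1$, one obtains $\sign(x^{i_0s_0}-x_*^{i_0s_0})(F^{i_0s_0}(x)-x^{i_0s_0})\le -(1-c)W(x)$, whence $\tfrac{d}{dt}W(x(t))\le -\overline{\sigma}(1-c)W(x(t))$ for almost every $t$; since $W(x(\cdot))$ is absolutely continuous, Gronwall gives $W(x(t))\le W(x(0))e^{-\overline{\sigma}(1-c)t}\to 0$, so $x(t)\to x_*$. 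For the $\lambda$-part, $\dot\lambda^{is}=\sigma^{is}(x^i(t))-\lambda^{is}$ is a stable scalar linear ODE whose forcing term $\sigma^{is}(x^i(t))$ converges to $\sigma^{is}(x_*^i)=\lambda_*^{is}$ (Lipschitzness of $\sigma^i$ gives an exponential rate), and variation of constants yields $\lambda(t)\to\lambda_*$. Hence $\{(x_*,\lambda_*)\}$ is a global attractor for \eqref{dincont} on the forward-invariant compact region where $\lambda$ is bounded away from zero.

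Finally I would assemble the almost sure convergence. The Lipschitz hypothesis on each $\sigma^i$, together with the lower bound on $\lambda$, makes $\Psi$ Lipschitz on that region, so Proposition~\ref{conv_ICT} applies and $\mathcal L(x_n,\lambda_n)$ is almost surely an ICT set for \eqref{dincont}. By Lemma~\ref{unacotado} the empirical frequencies satisfy $\liminf_n \lambda_n^{is}\ge\overline{\sigma}$ almost surely, so this ICT set lies in the region where $\{(x_*,\lambda_*)\}$ is the global attractor; since an ICT set is invariant, compact and attractor-free, it must be contained in that attractor, which is a single point, forcing $\mathcal L(x_n,\lambda_n)=\{(x_*,\lambda_*)\}$ and hence $(x_n,\lambda_n)\to(x_*,\lambda_*)$ almost surely. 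I expect the main obstacle to be precisely this global-attractor step: one must first secure the uniform positivity of $\lambda$ along the flow, so as to control the singular coefficient $1/\lambda^{is}$, before the contraction estimate on $W$ can be exploited, and one must handle with care the almost-everywhere differentiation of the nonsmooth function $W$.
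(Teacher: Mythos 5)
Your proposal is correct, and the first half (Banach fixed point via Remark~\ref{dinamicas}, the sup-norm Lyapunov estimate $\tfrac{d}{dt}\norm{x_t-x_*}_\infty\le-\overline\sigma(1-c)\norm{x_t-x_*}_\infty$ using $\sigma^{is}/\lambda^{is}\ge\overline\sigma$, and the conclusion via Proposition~\ref{conv_ICT}) coincides with the paper's argument. Where you diverge is the treatment of the $\lambda$-component: the paper builds a \emph{single} strict Lyapunov function $V(x,\lambda)=\max\{\norm{x-x_*}_\infty,\zeta^{-1}\norm{\lambda-\lambda_*}_\infty\}$ with the weight $\zeta$ chosen larger than the Lipschitz constants of the $\sigma^i$, and shows $\dot V\le-KV$ in both cases of the max, whereas you run a cascade: exponential convergence of $x$ first, then variation of constants for the linear, asymptotically autonomous $\lambda$-equation. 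Both are valid; the paper's version hands you attractor-hood and uniqueness of the ICT set in one line via \cite[Corollary~5.4]{benaim99}, while your version is more modular and is also more careful about two points the paper leaves implicit, namely the forward invariance of $\{\lambda^{is}\ge\mathrm{const}>0\}$ (which controls the singular coefficient $1/\lambda^{is}$) and the fact that Lemma~\ref{unacotado} places $\mathcal L(x_n,\lambda_n)$ inside that region. One caveat on your last step: mere pointwise convergence of every trajectory to $(x_*,\lambda_*)$ would \emph{not} force an ICT set to collapse to that point (a flow on a circle with a single rest point attracting all orbits is chain transitive on the whole circle), so the argument genuinely needs the \emph{uniform} exponential estimates you derive --- these give Lyapunov stability and uniform attraction, hence a bona fide global attractor, and only then does the attractor-free property of ICT sets finish the proof. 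Since your rates are explicit and uniform on the invariant region, this is a presentational gap rather than a mathematical one.
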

\begin{proof}
According to Remark~\ref{dinamicas}, $(x_*,\sigma(x_*)) \in X \times \Delta$ is a rest point of \eqref{dincont} if and only if $F(x_*)=x_*$, hence the existence and uniqueness follow from the fact that $F$ is contracting.\\ 
\indent Let $0 \leq L<1$ and $K_i$ be the Lipstchitz constants associated with the functions $F$ and $\sigma^i$, $i \in A$, respectively.  We want to find a suitable strict Lyapunov 
function, i.e., a function $V$ that decreases along the solution paths and that verifies $V^{-1}(\{ 0 \})=\{(x_*,\lambda_*)\}$ with $\lambda_*=\sigma(x_*)$. Let $V:X \times \Delta \to \RR_+$ be defined by 
$$V(x, \lambda)= \max \Big \{ \norm{x - x_{*}}_{\infty},\frac{1}{\zeta}\norm{\lambda - \lambda_{*}}_{\infty}\Big \},$$ 
where $\zeta>0$ will be defined later. Function $V$ is the maximum of a finite number of smooth functions, therefore it is absolutely continuous and its derivatives are the evaluation of the derivatives of the function attaining the maximum. We distinguish two cases:\\
{\sc Case 1.} $V(x_t, \lambda_t)=\norm{x_t - x_{*}}_{\infty}$. Let $i \in A$ and $s \in S^i$  be such that $V(x_t, \lambda_t)=|x_t^{is} - x_{*}^{is}|$. Let us assume that $x_t^{is} - x_{*}^{is} \geq 0$. Then, for almost all $t \in \RR$, 
\begin{align*}
\frac{d}{dt} V (x_t, \lambda_t) & = \frac{d}{dt} (x_t^{is} - x_*^{is})= \frac{\sigma^{is}(x_t^{i})}{\lambda_t^{is}} \big( F^{is}(x_t) - F^{is}(x_*) + x_*^{is}- x_t^{is} \big ),\\
&\leq -\xi(1 - L)\norm{x_t - x_*}_{\infty}=-\xi(1 - L)V (x_t, \lambda_t),
\end{align*}
for some $\xi>0$ such that $\sigma^{is}(x)\geq \xi$ for every $i \in A$ and $s\in S^i$. If $x_t^{is} - x_{*}^{is} < 0$, the computations are analogous.\\ 
{\sc Case 2.} $V(x_t,\lambda_t)=\frac{1}{\zeta}\norm{\lambda_t - x_{*}}_{\infty}$.  Let $i \in A$ and $s \in S^i$  be such that $V(x_t, \lambda_t)=\frac{1}{\zeta}|\lambda_t^{jr}-\lambda_*^{jr}|$. We also assume that $\lambda_t^{jr}-\lambda_*^{jr} \geq 0$. Then, for almost all $t \in \RR$, 
\begin{align*}
\frac{d}{dt} V (x_t, \lambda_t) &=\frac{1}{\zeta}\big [  \sigma^{jr}(x_t^{j})-\sigma^{jr}(x_*)+\lambda_*^{jr}-\lambda_t^{jr} \big ]\\
&\leq - \frac{1}{\zeta}\norm{\lambda_t - \lambda_{*}}_{\infty} + \frac{1}{\zeta}| \sigma^{jr}(x_t^{j})-\sigma^{jr}(x_*) | \\
& \leq -  V (x_t, \lambda_t) + \frac{\max_i K_i}{\zeta} \norm{x_t - x_*}_{\infty}\\
& = -(1-\frac{\max_i K_i}{\zeta})V (x_t, \lambda_t),
\end{align*}
and we take $\zeta>0$ sufficiently large to have $1>\max_i K_i/\zeta$. Again, if the relation $\lambda_t^{jr}-\lambda_*^{jr} < 0$ holds, the computations are the same.\\
\indent Hence $V(x_t, \lambda_t) \leq -K V(x_t, \lambda_t)$ for some $K > 0$. So $V$ decreases exponentially fast along the solution paths of the dynamics and $V(x,\lambda)=0$ if and only if $(x,\lambda)=(x_*,\lambda_*)$. Therefore the set $\{ (x_*,\lambda_*)\}$ is a global attractor which is the unique ICT set for \eqref{dincont} (see \cite[Corollary~5.4]{benaim99}). Proposition~\ref{conv_ICT} finishes the proof.
\end{proof}
\section{Logit rule} \label{logit}
The Logit rule is broadly based on the field of discrete choice models as well as game theory. For instance, a model of learning in games where the logit function is used is given by the logit-response dynamics \cite{blume93, afn10}. In this model, the aim is to study the  stochastic stability states of the induced process along with equilibrium selection issues (see \cite{msha12} for a payoff-based implementation of this dynamics and related models).

\noindent  Explicitly, the decision rule $\sigma: X \to \Delta$ is given by
\begin{equation}
 \sigma^{is}(x^i)=\frac{\exp{(\beta_ix^{is})}}{\sum \limits_{r \in S^i} \exp{(\beta_i x^{ir})}}, \label{logit_rule}
\end{equation}
for every $i \in A$ and $s \in S^i$, where $\beta_i>0$ is called the smoothing parameter for player $i$. According to Remark~\ref{dinamicas}, the following result shows that the rest points of the dynamics \eqref{dincont} are the Nash equilibria for an entropy perturbed version of the original game (see Cominetti et al.~\cite{cms10}).
\begin{lemma}\label{pert_game}
Under the Logit decision rule \eqref{logit_rule}, if $x \in X$ is a rest point of the dynamics \eqref{dincontcms}, then $\sigma(x)$ is a Nash equilibrium of a game where the action set for each player $i$ is $\Delta^i$ and her payoff $\overline {G}^i : \Delta \to \RR$ is given by 
\begin{equation}
 \overline {G}^i(\pi)=\sum_{s \in S^i} \pi^{is}G^i(s,\pi^{-i}) - \frac{1}{\beta_i}\sum_{s \in S^i}\pi^{is}\big (\ln(\pi^{is})-1 \big ).
\end{equation}
\end{lemma}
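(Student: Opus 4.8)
The plan is to exploit the classical fact that the Logit map \eqref{logit_rule} is exactly the \emph{smoothed best response} generated by the entropy perturbation appearing in $\overline{G}^i$, and then to translate the rest point condition of \eqref{dincontcms} into the corresponding equilibrium (fixed-point) relation. First I would characterize the rest points. Since assumption \eqref{hip_sigma} guarantees $\sigma^{is}(x^i)>0$, the equation $\Phi^{is}(x)=0$ in \eqref{dincontcms} forces
\begin{equation*}
x^{is}=G^i(s,\sigma^{-i}(x))=F^{is}(x),
\end{equation*}
for every $i\in A$ and $s\in S^i$; that is, at a rest point the perception vector equals the vector of expected payoffs against the opponents' current mixed profile.

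Next I would compute, for each player $i$, the best response in the perturbed game when the opponents play the fixed profile $\pi^{-i}=\sigma^{-i}(x)$. This amounts to maximizing over $\Delta^i$ the strictly concave function
\begin{equation*}
\pi^i\mapsto \sum_{s\in S^i}\pi^{is}G^i(s,\pi^{-i})-\frac{1}{\beta_i}\sum_{s\in S^i}\pi^{is}\big(\ln(\pi^{is})-1\big).
\end{equation*}
The entropy term is strictly concave, and its partial derivative $-\tfrac{1}{\beta_i}\ln \pi^{is}$ tends to $+\infty$ as $\pi^{is}\to 0^+$, so any maximizer lies in the relative interior of $\Delta^i$ and the Karush--Kuhn--Tucker conditions for the single constraint $\sum_s\pi^{is}=1$ are both necessary and sufficient. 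Introducing a multiplier $\mu$ and differentiating gives $G^i(s,\pi^{-i})-\tfrac{1}{\beta_i}\ln\pi^{is}-\mu=0$, whence after normalization
\begin{equation*}
\pi^{is}=\frac{\exp\big(\beta_i G^i(s,\pi^{-i})\big)}{\sum_{r\in S^i}\exp\big(\beta_i G^i(r,\pi^{-i})\big)}.
\end{equation*}
Thus the unique best response to $\pi^{-i}$ in the perturbed game is precisely the Logit response to the payoff vector $\big(G^i(s,\pi^{-i})\big)_{s\in S^i}$.

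Finally I would combine the two steps. Setting $\pi=\sigma(x)$, so that $\pi^{-i}=\sigma^{-i}(x)$, the rest point identity $x^{is}=G^i(s,\sigma^{-i}(x))$ shows that the payoff vector $\big(G^i(s,\pi^{-i})\big)_{s}$ coincides with $(x^{is})_{s}$. Plugging this into the best response formula above yields exactly $\pi^{is}=\sigma^{is}(x^i)$ by the definition \eqref{logit_rule}. Hence, for every player $i$, $\pi^i=\sigma^i(x^i)$ is the unique maximizer of $\overline{G}^i(\cdot,\pi^{-i})$, which is the definition of a Nash equilibrium of the perturbed game. The only point requiring care --- and the nearest thing to an obstacle --- is justifying that the maximizer is interior, so that the stationarity conditions apply and identify the softmax expression as the global optimum; strict concavity of the entropy then delivers uniqueness, closing the equivalence used together with Remark~\ref{dinamicas}.
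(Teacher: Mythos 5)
Your argument is correct and complete: the rest-point condition with $\sigma^{is}(x^i)>0$ gives $x^{is}=G^i(s,\sigma^{-i}(x))$, the interior KKT conditions for the strictly concave entropy-perturbed payoff identify the Logit map as the unique best response, and combining the two yields the fixed-point characterization of a Nash equilibrium of the perturbed game. The paper states this lemma without proof, attributing it to Cominetti et al., and your derivation is exactly the standard argument behind that citation, so there is nothing to flag.
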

\subsection{Almost sure convergence}\label{sec:logit}
We want to apply Proposition~\ref{convergencia} within this framework. For that purpose, let us introduce the maximum unilateral deviation payoff that a single player can experience, 
\begin{equation}
\eta=\max \limits_{\substack{i \in A, s \in S^i \\ r_1,r_2 \in \tilde S^{-i}}}| G^i(s,r_1) - G^i(s,r_2)|, \label{eta}
\end{equation}
\noindent where $\tilde S^{-i}= \{ (r_1 , r_2) \in S^{-i}\times S^{-i} ; r_1^k \neq r_2^k \text{ for exactly one } k \}$. Now the following proposition ensures that, if the parameters are sufficiently small, the unique attractor is attained with probability one. From now on, we denote $\alpha=\max_{i \in A}\sum_{j \neq i} \beta_j$.
\begin{proposition} \label{prop_logit}
If $2\eta \alpha <1$, the discrete process \eqref{APD} converges almost surely to the unique rest point $(x_*,\sigma(x_*))$ of the dynamics \eqref{dincont}.
\end{proposition}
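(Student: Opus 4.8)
The plan is to derive Proposition~\ref{prop_logit} from the general convergence result Proposition~\ref{convergencia}, whose two hypotheses are that the map $F$ of \eqref{f} is a contraction for the infinity norm and that each decision map $\sigma^i$ is Lipschitz for the infinity norm. Under the Logit rule \eqref{logit_rule} the second hypothesis is essentially free: $\sigma^i$ is smooth with $\partial \sigma^{is}/\partial x^{ir} = \beta_i\, \sigma^{is}(\delta_{sr} - \sigma^{ir})$, and the absolute row sums of this Jacobian equal $2\beta_i\,\sigma^{is}(1-\sigma^{is}) \leq \beta_i/2$, so $\sigma^i$ is globally Lipschitz for $\norm{\cdot}_\infty$. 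Hence the entire substance of the argument is to show that the standing assumption $2\eta\alpha < 1$ forces $F$ to be an infinity-norm contraction; once this is established, the existence and uniqueness of the rest point, the global attractor property, and the almost sure convergence of \eqref{APD} all follow verbatim from Proposition~\ref{convergencia}.

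To estimate $\norm{F(x) - F(y)}_\infty$ I would fix $i \in A$ and $s \in S^i$ and write $F^{is}(x) - F^{is}(y) = G^i(s, \sigma^{-i}(x)) - G^i(s, \sigma^{-i}(y))$. The key step is to exploit the multilinearity of $G^i$ by passing from the opponent profile $\sigma^{-i}(y)$ to $\sigma^{-i}(x)$ one coordinate $j \neq i$ at a time, so that the difference telescopes into a sum over $j \neq i$ of terms in which only player $j$'s mixed action changes. In each such term $G^i$ is linear in the $j$-th argument, and since $\sigma^j(x^j)$ and $\sigma^j(y^j)$ are both probability vectors their difference sums to zero; this lets me subtract the payoff evaluated at an arbitrary fixed reference pure strategy of player $j$ before estimating. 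Each resulting payoff increment then amounts to changing a single opponent's pure action, so it is controlled by the maximal unilateral deviation $\eta$ of \eqref{eta}, giving $|F^{is}(x) - F^{is}(y)| \leq \eta \sum_{j\neq i} \norm{\sigma^j(x^j) - \sigma^j(y^j)}_1$.

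It remains to convert the $\ell^1$ distance between the softmax outputs into an infinity-norm distance between their arguments, and I expect this to be the delicate quantitative point, since it is precisely where the factor $2$ in $2\eta\alpha$ is produced. Computing the $\ell^\infty \to \ell^1$ operator norm of $D\sigma^j$ gives $\sum_{s,r}|\partial\sigma^{js}/\partial x^{jr}| = 2\beta_j \sum_s \sigma^{js}(1-\sigma^{js}) \leq 2\beta_j$, whence $\norm{\sigma^j(x^j) - \sigma^j(y^j)}_1 \leq 2\beta_j \norm{x^j - y^j}_\infty$. Substituting this into the previous bound and recalling $\alpha = \max_i \sum_{j\neq i}\beta_j$ yields $|F^{is}(x) - F^{is}(y)| \leq 2\eta\big(\sum_{j\neq i}\beta_j\big)\norm{x-y}_\infty \leq 2\eta\alpha\,\norm{x - y}_\infty$, and taking the maximum over $i$ and $s$ shows that $F$ is a contraction with modulus $2\eta\alpha < 1$. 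Feeding this contraction together with the Lipschitz property of the $\sigma^i$ into Proposition~\ref{convergencia} then completes the proof. The main obstacle is thus not the invocation of Proposition~\ref{convergencia} but the contraction estimate itself, and in particular keeping the constants sharp enough across the telescoping and the softmax bound to land exactly on the threshold $2\eta\alpha < 1$.
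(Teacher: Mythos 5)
Your proof is correct and follows the same route as the paper: verify that $F$ is an infinity-norm contraction and that each $\sigma^i$ is Lipschitz for $\norm{\cdot}_\infty$, then invoke Proposition~\ref{convergencia}. The only difference is that the paper simply cites Cominetti et al.\ (their Proposition~5) for the implication $2\eta\alpha<1 \Rightarrow F$ contracting and justifies the Lipschitz property of $\sigma^i$ by smoothness, whereas you reprove the contraction estimate in full via the coordinate-wise telescoping, the centering trick using $\sum_r\big(\sigma^{jr}(x^j)-\sigma^{jr}(y^j)\big)=0$, and the $\ell^\infty\to\ell^1$ Jacobian bound $\norm{\sigma^j(x^j)-\sigma^j(y^j)}_1\le 2\beta_j\norm{x^j-y^j}_\infty$ --- all of which is sound and lands exactly on the modulus $2\eta\alpha$.
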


\begin{proof}
We know from Cominetti et al.~\cite[Proposition~5]{cms10} that, if $2\eta \alpha<1$, function $F$ (defined in \eqref{f}) is contracting for the infinity norm. Observe also that, for every $i \in A$, function $\sigma^i$ is Lipschitz for the infinity norm, since it is a smooth function defined on a compact set. Therefore, Proposition~\ref{convergencia} applies.
\end{proof}
\subsubsection*{Rate of Convergence} \label{seccion_velocidad}
Up to this point, we were able to reproduce some of the theoretical results of the original model (\ref{algocms}) regarding its almost sure convergence to global attractors. Now, we want to justify the inclusion of a counter to the previous actions in terms of the rate of convergence when both learning processes \eqref{APD} and \eqref{algocms} converge almost surely to $(x_*,\lambda_*)$ and $x_*$, respectively, and step size $\gamma_n=\frac{1}{n}$ is considered. This rate of convergence is closely linked to the largest real part eigenvalue of the Jacobian matrix of the functions $\Psi=(\Psi_x,\Psi_\lambda)$ and $\Phi$ at the respective rest points.\\ 
\indent Let us denote $\rho(\mathcal B)$ the maximum real part of the eigenvalues of a matrix $\mathcal B \in \RR^{k \times k}$ , i.e.,
\begin{equation*}
 \rho(\mathcal B)=\max\{\PRe(\mu_j);\, j=1, \ldots, k,  \text{ where }\mu_j \in \mathbb C \text{ is an eigenvalue of the matrix } \mathcal B  \}.\label{rho}
\end{equation*}
We say that a matrix $\mathcal B$ is stable if $\rho(\mathcal B)<0$.
 \begin{lemma} \label{eigenvalues}
Assume that $2\eta \alpha <1$. Let $(x_*,\lambda_*)$ and $x_*$ be the unique rest points of the dynamics \eqref{dincont} and \eqref{dincontcms}, respectively. Then
\begin{equation}
 -1 \leq \rho(\nabla \Psi(x_*,\lambda_*))<-\frac{1}{2}\leq-\frac{N}{\sum \limits_{k \in A} |S^k|} \leq\rho(\nabla \Phi(x_*))<0. \label{ineq_rho}
\end{equation}
\end{lemma}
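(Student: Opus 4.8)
The plan is to compute both Jacobians explicitly at the rest points, exploit a block-triangular structure for $\Psi$ and a trace identity for $\Phi$, and then extract the four inequalities in \eqref{ineq_rho} from the contraction modulus of $F$. Throughout write $B=\nabla F(x_*)$ for the Jacobian of the map in \eqref{f}, and recall that $\lambda_*=\sigma(x_*)$ and $F(x_*)=x_*$.

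First I would differentiate $\Psi$ at $(x_*,\lambda_*)$. Three simplifications occur because $F^{is}(x)-x^{is}$ vanishes at $x_*$: the block $\partial\Psi_x/\partial\lambda$ is zero; the block $\partial\Psi_x/\partial x$ collapses to $B-I$ (the stray term involving $\nabla\sigma$ is multiplied by $F-x=0$, and the prefactor $\sigma^{is}/\lambda^{is}$ equals $1$ at $(x_*,\lambda_*)$); while $\partial\Psi_\lambda/\partial\lambda=-I$ identically. Hence, with $J=\nabla\sigma(x_*)$,
\[
\nabla\Psi(x_*,\lambda_*)=\begin{pmatrix} B-I & 0\\ J & -I\end{pmatrix},
\]
which is block lower triangular, so its spectrum is the union of those of $B-I$ and of $-I$. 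Therefore $\rho(\nabla\Psi(x_*,\lambda_*))=\max\{-1,\ \rho(B)-1\}$, which immediately gives $\rho(\nabla\Psi)\ge -1$ and reduces the remaining bound $\rho(\nabla\Psi)<-\tfrac12$ to showing $\rho(B)<\tfrac12$.

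The estimate $\rho(B)<\tfrac12$ is where the work lies. Since $F^{is}$ does not depend on $x^i$, the diagonal blocks of $B$ vanish and, for $j\neq i$, $B_{(is),(jr)}=\sum_{u\in S^j}\partial_{\pi^{ju}}G^i(s,\sigma^{-i}(x_*))\,\partial_{x^{jr}}\sigma^{ju}(x_*^j)$. Using $\sum_u\partial_{x^{jr}}\sigma^{ju}=0$, I would subtract a constant from the coefficients $\partial_{\pi^{ju}}G^i$, each pairwise difference being a single-player deviation bounded by $\eta$; together with the softmax identity $\sum_u|\partial_{x^{jr}}\sigma^{ju}(x_*^j)|=2\beta_j\sigma_*^{jr}(1-\sigma_*^{jr})$ and the zero-sum factor $\tfrac12$, this gives $|B_{(is),(jr)}|\le \eta\beta_j\sigma_*^{jr}(1-\sigma_*^{jr})$. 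Summing over $r$ (with $\sum_r\sigma_*^{jr}(1-\sigma_*^{jr})\le 1$) and over $j\neq i$ yields $\norm{B}_\infty\le \eta\sum_{j\neq i}\beta_j\le \eta\alpha$, whence $\rho(B)\le\norm{B}_\infty\le\eta\alpha<\tfrac12$ by the hypothesis $2\eta\alpha<1$. I expect this sharp local bound $\eta\alpha$ to be the main obstacle: the cruder global contraction modulus $2\eta\alpha$ of Cominetti et al.~\cite[Proposition~5]{cms10} only yields $\rho(B)<1$, and the gain of the factor $2$ — which is exactly what produces the strict threshold $-\tfrac12$ — comes from evaluating the Jacobian precisely at $x_*$, where $\sum_r\sigma_*^{jr}(1-\sigma_*^{jr})\le 1$.

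Finally I would treat $\Phi$, where $\nabla\Phi(x_*)=S(B-I)$ with $S=\operatorname{diag}(\sigma^{is}(x_*^i))$ (again because $F^{is}-x^{is}$ vanishes at $x_*$). For the lower bound I would use that the largest real part of the eigenvalues dominates their average, $\rho(\nabla\Phi(x_*))\ge \operatorname{tr}(\nabla\Phi(x_*))/\big(\sum_{k\in A}|S^k|\big)$; the diagonal entries of $\nabla\Phi(x_*)$ are $-\sigma^{is}(x_*^i)$, and since each $\sigma^i$ sums to $1$ the trace equals $-N$, giving $\rho(\nabla\Phi(x_*))\ge -N/\sum_{k}|S^k|$. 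The upper bound $\rho(\nabla\Phi(x_*))<0$ follows from Gershgorin: the disc of row $(is)$ is centered at $-\sigma^{is}(x_*^i)$ with radius $\sigma^{is}(x_*^i)\sum_{jr}|B_{(is),(jr)}|\le \sigma^{is}(x_*^i)\eta\alpha<\sigma^{is}(x_*^i)$, hence sits strictly in the open left half-plane. The central inequality $-\tfrac12\le -N/\sum_k|S^k|$ is the combinatorial fact $\sum_k|S^k|\ge 2N$, valid because every player has at least two strategies. Chaining the four estimates gives \eqref{ineq_rho}.
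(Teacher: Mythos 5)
Your proposal is correct and follows essentially the same route as the paper: block lower-triangular structure of $\nabla\Psi(x_*,\lambda_*)$ with diagonal $-1$ in the upper-left block, a Gershgorin/row-sum bound of $\eta\alpha$ on the off-diagonal part coming from the softmax derivatives and the unilateral-deviation bound, and the trace identity $\operatorname{tr}\nabla\Phi(x_*)=-N$ for the lower bound on $\rho(\nabla\Phi(x_*))$. Writing the block as $B-I$ and bounding $\norm{B}_\infty$ is just a repackaging of the paper's Gershgorin argument, so there is nothing substantive to flag.
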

\begin{proof} Straightforward computations concerning the function $\Psi$ (see \eqref{dincont}) show that 
\begin{equation*}
\frac{\partial \Psi_\lambda^{is}}{\partial x^{jr}}(x_*, \lambda_*)=0 \quad \text{and} \quad \frac{\partial \Psi_\lambda^{is}}{\partial \lambda^{jr}}(x_*, \lambda_*)=-\ind_{\{is=jr\}}, 
\end{equation*}
\noindent for every $i,j \in A$ and $(s,r) \in S^i \times S^j$. Therefore, the matrix $\nabla \Psi(x_*, \lambda_*)$ looks like 
\begin{equation}
\mathbf \nabla \Psi(x_*, \lambda_*)=
\left ( \begin{array}{cc}
\nabla_x \Psi_x(x_*, \lambda_*) & \phantom{-}0 \\
L&-I
\end{array} \right ), \label{matriz}
\end{equation}
\noindent where $I$ stands for the identity matrix and $\nabla_x \Psi_x(x_*, \lambda_*)$ denotes the Jacobian matrix of $\Psi_x$ with respect to $x$ at $(x_*, \lambda_*)$. Notice that the interesting eigenvalues of this matrix are given by its upper-left block because of the zero block and the identity matrix on the right side in \eqref{matriz}. Observe also that $\frac{\partial \Psi_x^{is}}{\partial x^{is}}(x_*, \lambda_*)=-1$, i.e., matrix $\nabla_x \Psi_x(x_*, \lambda_*)$ has diagonal terms equal to $-1$.\\ 
\indent On the other hand, we know that every eigenvalue of a complex matrix $\mathcal B=(\mathcal B_{pq})$ lies within at least one of the Gershgorin discs $D_p(\mathcal B)= \{ z  \in \CC, |z - \mathcal B_{pp}| \leq R_p \}$ where $R_p=\sum_{q \neq p} |\mathcal B_{pq}|$. Given the specific form of matrix $\nabla_x \Psi(x_*, \lambda_*)$ we can estimate the position of its eigenvalues. So, in our case,
\begin{equation*}
R_{is} = \sum_{\substack{ j\in A,\\ j\neq i} }\sum_{ r \in S^j}  \bigg |  \frac{\partial \Psi_x^{is}}{\partial x^{jr}}(x_*, \lambda_*)\bigg |,       
\end{equation*}
\noindent since $\frac{\partial \Psi_x^{is}}{\partial x^{jr}}(x_*,\lambda_*)=0$ if $i=j$ and $r \neq s$. This follows from the fact that $F^{is}(x)$ (defined in \eqref{f}) is independent of the vector $x^i$. Explicitly, 
\begin{equation*}
\frac{\partial \Psi_x^{is}}{\partial x^{jr}}(x_*, \lambda_*)=\beta_j \sigma_*^{jr}\big [ G^i(s,r,\sigma_*^{-(i,j)})- G^i(s,\sigma_*^{-i}) \big ], 
\end{equation*}
\noindent where 
\begin{equation*}
G^i(s,r,\sigma_*^{-(i,j)})=\sum_{\substack{a \in S^{-i} \\ a^j=r} }G^{i}(s, a)\prod_{\substack{ k \neq i \\ k \neq j} }\sigma_*^{k a^k}, 
\end{equation*}
\noindent for $i \neq j$. So that
\begin{align*}
R_{is} &=  \sum_{\substack{j \in A \\ j \neq i}}\beta_j\sum_{ r \in S^j}   \sigma_*^{jr}\big |  G^i(s,r,\sigma_*^{-(i,j)})- G^i(s,\sigma_*^{-i}) \big | \\
& \leq \eta\alpha.
\end{align*}
Then we have that all the eigenvalues of matrix $\nabla_x \Psi_x(x_*, \lambda_*)$ are contained in the complex disc 
\begin{equation}
\{ z  \in \CC, |z +1| \leq \eta\alpha \} \supseteq \bigcup_{\substack {i \in A \\ s \in S^i }}D_{is}(\nabla_x \Psi_x(x_*, \lambda_*)), \label{jugon}
\end{equation}
 which implies that $\rho(\nabla \Psi(x_*,\lambda_*))<-1/2$.\\
\indent Analogous computations involving function $\Phi$ show that
\begin{equation*}
D_{is}(\nabla \Phi(x_*))\subseteq \{ z  \in \CC, |z +\sigma_*^{is}| \leq \sigma_*^{is} \eta\alpha \},
\end{equation*}
for every $i \in A$ and $s \in S^i$. Since $-\sigma_*^{is}+\sigma_*^{is}\eta \alpha<0$,  then $\rho(\nabla \Phi(x_*))<0$.\\
\indent It is obvious that $-1 \leq \rho(\nabla \Psi(x_*,\lambda_*))$. Inequality $-N/\sum_k |S^k| \leq\rho(\nabla \Phi(x_*))$ follows since the trace of matrix  $\nabla \Phi(x_*)$ is equal to $-N$.
\end{proof}
\begin{remark}
Notice that $1/2=N/\sum_k |S^k|$ if and only if $|S_k|=2$ for all $k \in A$.
\end{remark}

\indent The following reduced version of Chen~\cite[Theorem~3.1.1]{chen02} will be useful.
\begin{theorem} \label{teo_chen}
Consider the discrete process given by \eqref{general_discreto}. Assume that the following hold.
\begin{itemize}
\item[$(a)$] For every $n \in \NN$, $\gamma_n>0$,  $\lim_{n \to +\infty}\gamma_n =0$, $\sum_n \gamma_n = +\infty$ and
$$\lim_{ n \to +\infty} \frac{\gamma_n - \gamma_{n+1}}{\gamma_{n+1}\gamma_n}= \overline \gamma  \geq 0.$$
\item[$(b)$] $z_n \to z_0$ almost surely.
\item[$(c)$] There exists $\delta \in (0,1]$ such that
\begin{itemize}
\item[$(c.1)$] for a path such that $z_n \to z_0$, the noise $V_{n}$ can be decomposed into $V_{n}=V'_{n}+V''_{n}$ where
$$\sum_{n \geq 1} \gamma_n^{1 - \delta} V'_{n+1} < +\infty  \quad \text{   and   } \quad V''_{n}=O(\gamma_n^\delta),$$
\item[$(c.2)$] the function $H$ is locally bounded and is differentiable at $z_0$ such that $H(z)= \overline H (z - z_0) + r(z)$ where $r(z_0)=0$ and $r(z)=o(\norm{z - z_0})$ as $z \to z_0$ and 
\item[$(c.2)$] the matrix $\overline H$ is stable and, furthermore,  $\overline H + \delta \overline \gamma I$ is also stable.
\end{itemize} 
\end{itemize}  
Then, almost surely,
$$\epsilon_n(z_n - z_0)\to 0, \quad \text{as    } \: n \to +\infty,$$
for any $\epsilon_n=o((1/\gamma_n)^{\delta})$.
\end{theorem}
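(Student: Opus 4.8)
The plan is to track the rescaled error and show it vanishes by turning the scheme into a \emph{stable} linear recursion. Write $e_n=z_n-z_0$. By the differentiability assumption $(c.2)$ the basic process becomes
\begin{equation*}
e_{n+1}=(I+\gamma_{n+1}\overline H)e_n+\gamma_{n+1}r(z_n)+\gamma_{n+1}V_{n+1}.
\end{equation*}
First I would set $u_n=\gamma_n^{-\delta}e_n$ (the quantity whose vanishing is claimed) and multiply the above by $\gamma_{n+1}^{-\delta}$. The step-size hypothesis $(a)$ gives $\gamma_n/\gamma_{n+1}=1+\gamma_n a_n\to1$ with $a_n=(\gamma_n-\gamma_{n+1})/(\gamma_{n+1}\gamma_n)\to\overline\gamma$, so $(\gamma_n/\gamma_{n+1})^{\delta}=1+\delta\overline\gamma\,\gamma_{n+1}+o(\gamma_{n+1})$. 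Substituting, the rescaled error satisfies
\begin{equation*}
u_{n+1}=\big(I+\gamma_{n+1}\overline H_\delta+o(\gamma_{n+1})\big)u_n+\gamma_{n+1}^{1-\delta}r(z_n)+\gamma_{n+1}^{1-\delta}V_{n+1},\qquad \overline H_\delta:=\overline H+\delta\overline\gamma\,I.
\end{equation*}
The crux of the rescaling is that the gain matrix has changed from $\overline H$ to $\overline H_\delta$, and it is precisely $\overline H_\delta$ that is assumed stable in $(c.3)$.

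Next I would introduce the discrete transition operators $\Phi_{n,k}=\prod_{j=k+1}^{n}(I+\gamma_j\overline H_\delta+o(\gamma_j))$ and prove the uniform bound $\norm{\Phi_{n,k}}\le C\exp(-c\sum_{k<j\le n}\gamma_j)$ for some $c>0$ and all large $k$. This rests on the stability of $\overline H_\delta$ (fix $P\succ0$ solving the Lyapunov equation $\overline H_\delta^{\top}P+P\overline H_\delta=-I$, which contracts $u\mapsto(I+\gamma\overline H_\delta)u$ for small $\gamma$) together with the observation that the matrix perturbations contribute $\sum o(\gamma_j)$ to the exponent, dominated by $-c\sum\gamma_j$ once they fall below $c$. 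From here the discrete variation-of-constants formula
\begin{equation*}
u_n=\Phi_{n,n_0}u_{n_0}+\sum_{k=n_0}^{n-1}\Phi_{n,k+1}\big(\gamma_{k+1}^{1-\delta}r(z_k)+\gamma_{k+1}^{1-\delta}V_{k+1}\big)
\end{equation*}
reduces the problem to estimating four contributions, all worked out pathwise on the almost sure event $\{z_n\to z_0\}$ furnished by $(b)$, on which the decomposition of $(c.1)$ is available.

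The homogeneous term $\Phi_{n,n_0}u_{n_0}\to0$ because $\sum_j\gamma_j=+\infty$. For the remainder, $(c.2)$ and $z_n\to z_0$ give $\norm{r(z_k)}\le\varepsilon_k\norm{e_k}=\varepsilon_k\gamma_k^{\delta}\norm{u_k}$ with $\varepsilon_k\to0$, so its contribution is $O\big(\sum_k\norm{\Phi_{n,k+1}}\varepsilon_k\gamma_{k+1}\norm{u_k}\big)$; being self-referential, it is first absorbed through a discrete Gronwall estimate yielding an a priori bound $\sup_n\norm{u_n}<+\infty$, and then shown to be $o(1)$ since $\varepsilon_k\to0$. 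The term $\gamma_{k+1}^{1-\delta}V''_{k+1}=O(\gamma_{k+1})$ is a forcing of order $\gamma_{k+1}$ damped by the exponentially decaying $\Phi_{n,k+1}$, hence $o(1)$ using $\sum_{k\le n}\gamma_{k+1}\norm{\Phi_{n,k+1}}=O(1)$. The main obstacle is the summable noise $\sum_k\Phi_{n,k+1}\gamma_{k+1}^{1-\delta}V'_{k+1}$: here the forcing is only summable, not individually $O(\gamma_{k+1})$, so no factor can be pulled out. I would treat it by Abel summation against the tail $T_m=\sum_{k\ge m}\gamma_{k+1}^{1-\delta}V'_{k+1}$, which tends to $0$ by $(c.1)$, using $\Phi_{n,k}-\Phi_{n,k+1}=\Phi_{n,k+1}(\gamma_{k+1}\overline H_\delta+o(\gamma_{k+1}))$ so that the transition-operator differences carry a summable $O(\gamma_{k+1})$ weight against the exponential decay; choosing $n_0$ large first (to make $T_{n_0}$ and $\sup_{m\ge n_0}\norm{T_m}$ small) and then letting $n\to\infty$ forces this contribution to $o(1)$ as well. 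Collecting the four estimates gives $u_n\to0$, that is $(1/\gamma_n)^{\delta}(z_n-z_0)\to0$ almost surely. The delicate interplay between the a priori boundedness needed to close the self-referential remainder and the Abel-summation control of the merely summable noise is where the real work lies.
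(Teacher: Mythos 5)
The paper offers no proof of this statement: it is quoted verbatim as a ``reduced version of Chen [Theorem~3.1.1]'', so there is no internal argument to compare against. Your strategy --- linearize around $z_0$, rescale by $\gamma_n^{-\delta}$ so that the effective drift matrix becomes $\overline H+\delta\overline\gamma I$, bound the discrete transition operators exponentially via a Lyapunov matrix for $\overline H_\delta$, and split the forcing into four contributions (homogeneous, remainder $r$, $V'$ via Abel summation against vanishing tails, $V''$) --- is the standard route and is essentially Chen's own method. The rescaling identity, the homogeneous term, the Gronwall/a priori boundedness step needed to close the self-referential $r(z_k)$ contribution, and the Abel-summation treatment of the merely summable $V'$ are all sound (the boundedness step does require first noting that the $V'$ and $V''$ contributions are uniformly bounded in $n$, which your estimates provide).

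The one step that fails as written is the $V''$ term. From $\gamma_{k+1}^{1-\delta}V''_{k+1}=O(\gamma_{k+1})$ and $\sum_{k}\gamma_{k+1}\norm{\Phi_{n,k+1}}=O(1)$ you conclude that this contribution is $o(1)$; it is only $O(1)$. This is not a repairable oversight under the hypothesis as stated: take $d=1$, $H(z)=-z$, $z_0=0$, $\gamma_n=1/n$ (so $\overline\gamma=1$), $\delta\in(0,1)$, and $V_{n+1}=V''_{n+1}=\gamma_n^{\delta}$. All hypotheses hold, yet $u_n=n^{\delta}z_n$ obeys $u_{n+1}=\bigl(1-\tfrac{1-\delta}{n}+O(n^{-2})\bigr)u_n+\tfrac1n\bigl(1+O(n^{-1})\bigr)$ and converges to $1/(1-\delta)\neq 0$, so $\gamma_n^{-\delta}(z_n-z_0)\not\to 0$. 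In other words, with $V''_n=O(\gamma_n^{\delta})$ the correct conclusion is only $z_n-z_0=O(\gamma_n^{\delta})$; the little-$o$ conclusion requires $V''_n=o(\gamma_n^{\delta})$, which is Chen's actual hypothesis and is what holds in the paper's application (there $V''_n=O(1/n)$ and $\delta<1/2$). Under the little-$o$ hypothesis, your own splitting device for the $r(z_k)$ term --- cut the sum at a large index $K$, use $\Phi_{n,k+1}\to 0$ for $k\le K$ and the smallness of the coefficient for $k>K$ --- disposes of the $V''$ term as well, and the proof is then complete.
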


\indent The previous result allows us to show that our algorithm is faster. This means that, under the common hypothesis $2\eta\alpha<1$ (which ensures almost sure convergence for both processes), employing the adjusted process \eqref{APD} will help the players to adapt their behavior faster than with the original process \eqref{algocms}.
\begin{proposition} \label{rate}
Assume that $2\eta\alpha<1$ and let  $(x_*,\lambda_*)\in X \times \Delta$ and $x_* \in X$ be the unique rest points of dynamics \eqref{dincont} and \eqref{dincontcms}, respectively. Then the following estimates hold
\begin{itemize}
\item[$(i)$] for almost all trajectories of \eqref{algocms} 
 \begin{equation*}
\varepsilon_n(x_n - x_*)\to 0, \quad \text{as    } \: n \to +\infty,
\end{equation*}
for every sequence $\varepsilon_n= o(n^{|\rho(\nabla \Phi(x_*))|})$, 
\item[$(ii)$]  for almost all trajectories of  \eqref{APD} 
\begin{equation*}
\varepsilon_n\big ( (x_n,\lambda_n) - (x_*,\lambda_*)\big )\to 0, \quad \text{as    } \: n \to +\infty,
\end{equation*}
for every sequence $\varepsilon_n= o(n^{\frac{1}{2}})$.
\end{itemize}
\end{proposition}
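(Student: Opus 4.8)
The plan is to obtain both rates as direct applications of Chen's theorem (Theorem~\ref{teo_chen}), run once on the scheme \eqref{algocms} with $H=\Phi$, $z_0=x_*$ and once on \eqref{APD} with $H=\Psi$, $z_0=(x_*,\lambda_*)$, in each case with the step size $\gamma_n=\frac1n$. For this choice one computes $\frac{\gamma_n-\gamma_{n+1}}{\gamma_{n+1}\gamma_n}=\frac{1/n-1/(n+1)}{1/(n(n+1))}=1$, so hypothesis $(a)$ holds with $\overline\gamma=1$. Hypothesis $(b)$ is exactly the almost sure convergence already in hand: for \eqref{algocms} it is the Cominetti et al.\ result recalled before Proposition~\ref{convergencia} (valid since $2\eta\alpha<1$ makes $F$ contracting), and for \eqref{APD} it is Proposition~\ref{prop_logit}.

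Next I would check the differentiability condition in $(c)$. Near their rest points both $\Phi$ and $\Psi$ are smooth; the only delicate factor is the $1/\lambda^{is}$ appearing in $\Psi_x$, but $\lambda_*^{is}=\sigma_*^{is}$ is bounded away from zero, so $\Psi$ is $C^1$ in a neighbourhood of $(x_*,\lambda_*)$. Hence the first-order expansion $H(z)=\overline H(z-z_0)+r(z)$ holds with $\overline H=\nabla\Phi(x_*)$, respectively $\overline H=\nabla\Psi(x_*,\lambda_*)$. The stability requirements are then read off Lemma~\ref{eigenvalues}: both Jacobians are stable, and since $\overline\gamma=1$ the matrix $\overline H+\delta\overline\gamma I=\overline H+\delta I$ has spectral abscissa $\rho(\overline H)+\delta$, so it is stable exactly when $\delta<|\rho(\overline H)|$. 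This yields the admissible ranges $\delta<|\rho(\nabla\Phi(x_*))|$ for \eqref{algocms} and $\delta<|\rho(\nabla\Psi(x_*,\lambda_*))|$ for \eqref{APD}.

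The heart of the argument, and the step I expect to be the main obstacle, is the noise condition $(c.1)$. I would split each noise term into a martingale difference part $V'$ and a vanishing remainder $V''$. For \eqref{algocms} the noise $\tilde U_{n+1}$ of \eqref{noisecms} is already a bounded martingale difference sequence, so $V''=0$. For \eqref{APD} the term $U_{n+1}$ of \eqref{noise} satisfies $\escon{U_{n+1}}{\mathcal F_n}=0$ and, thanks to Lemma~\ref{unacotado} together with the bound $\norm{U_{n+1}^{is}}\le C/\lambda_n^{is}$, is almost surely bounded; the term $M_{n+1}$ is a bounded martingale difference plus the remainder $b_{n+1}=O(1/n)=O(\gamma_n)$, which plays the role of $V''=O(\gamma_n^\delta)$ for every $\delta\le1$. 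It remains to control $\sum_n\gamma_n^{1-\delta}V'_{n+1}$. Since $(V'_{n+1})_n$ is a bounded martingale difference sequence, its partial sums form an $L^2$-bounded martingale as soon as $\sum_n\gamma_n^{2(1-\delta)}<+\infty$; with $\gamma_n=1/n$ this series is $\sum_n n^{-2(1-\delta)}$, convergent precisely when $\delta<\frac12$. By the martingale convergence theorem the series then converges almost surely, so $(c.1)$ holds for every $\delta\in(0,\frac12)$.

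Finally I would intersect the two constraints. For \eqref{algocms}, Lemma~\ref{eigenvalues} gives $|\rho(\nabla\Phi(x_*))|\le N/\sum_k|S^k|\le\frac12$, so the binding condition is $\delta<|\rho(\nabla\Phi(x_*))|$ and Chen's conclusion $(1/\gamma_n)^\delta(z_n-z_0)\to0$ reads $n^\delta(x_n-x_*)\to0$, giving $(i)$. For \eqref{APD}, the same lemma gives $|\rho(\nabla\Psi(x_*,\lambda_*))|>\frac12$, so the eigenvalue constraint is slack and the binding condition is $\delta<\frac12$; Chen's theorem then yields $n^\delta\big((x_n,\lambda_n)-(x_*,\lambda_*)\big)\to0$ for every $\delta\in(0,\frac12)$, which is $(ii)$.
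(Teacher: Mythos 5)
Your proposal is correct and follows essentially the same route as the paper: both apply Theorem~\ref{teo_chen} with $\gamma_n=1/n$, verify $(c.1)$ via the $L^2$-bounded martingale $\sum_k (1/k)^{1-\delta}V'_{k+1}$ (convergent for $\delta<1/2$) plus the $O(1/n)$ remainder, and read the stability of $\overline H+\delta I$ off the Gershgorin estimates of Lemma~\ref{eigenvalues}. Your explicit discussion of which constraint binds in each case merely spells out what the paper leaves implicit.
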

\begin{proof} Recall that $\epsilon_{n}=(U_{n}, M_{n})$ and $\tilde U_n$ are the noise terms associated with \eqref{APD} and \eqref{algocms}, respectively (see \eqref{noise} and \eqref{noisecms}). We observe that, for both processes, hypotheses $(a)$ and $(b)$ in Theorem~\ref{teo_chen} are immediately satisfied, since $\gamma_n=\frac{1}{n}$, (with $\overline \gamma=1$) and since Proposition~\ref{prop_logit} applies. Let us verify that condition $(c)$ holds.
\begin{itemize} 
\item[$(i)$] Fix $\delta \in (0, |\rho(\nabla \Phi(x_*))|) $. The random process $(\tilde U_n)_n$ is almost surely bounded and satisfies that $\escon{\tilde U_{n+1}}{\mathcal F_n}=0$. Therefore, $Z_n=\sum_{k=1}^{n} (1/k)^{1 - \delta} \tilde U_{k+1}$ is a martingale where $\sup_n \norm{Z_n}^2 < \sum_{k=1}^{+\infty} (1/k)^{2(1 - \delta)}<+\infty$, and thus convergent (since $\delta < 1/2$). To conclude, observe that function $\Phi$ is smooth and that matrix $\nabla \Phi(x_*) + \delta I$ is stable.
\item[$(ii)$] Fix $\delta \in (0,1/2)$. We repeat the argument by noting that $\epsilon_n=\tilde \epsilon_n + \tilde b_n$ where $\tilde b_n=O(1/n)$ and $\escon{\tilde \epsilon_{n+1}}{\mathcal F_n}=0$. To finish, we use the fact that matrix $\nabla \Psi(x_*,\lambda_*) + \delta I$ is stable since inequality \eqref{ineq_rho} holds.\\ 
\end{itemize}
\end{proof}
 Two important comments are in order. First, as before, let us call $C_n$ the upper-left block of the matrix $\escon{\epsilon_{n+1}^T\epsilon_{n+1}}{\mathcal F_n}$, where,
 \begin{equation*}
   C_n^{is,jr}=
   \begin{cases}
    0 & \text{ if } i\neq j ,\\
  -\left ( \frac{\sigma^{is}(x_n^i)}{\lambda_n^{is}}(G^i(s,\sigma^{-i}(x_n)) - x_{n}^{is})\right ) \cdot  \left ( \frac{\sigma^{ir}(x_n^i)}{\lambda_n^{ir}}(G^i(r,\sigma^{-i}(x_n)) - x_{n}^{ir})\right ) + O \left ( \frac{1}{n} \right) & \text{ if } i=j, s \neq r,\\
\frac{\sigma_n^{is}}{(\lambda_n^{is})^2} \left [ \escon{(G^i(s, s_{n+1}^{-i}) - x_n^{is})^2}{\mathcal F_n} -\sigma_n^{is}(G^i(s, \sigma_{n}^{-i}) - x_n^{is})^2 +\right]   O \left ( \frac{1}{n} \right)& \text{ otherwise }.
\end{cases}
 \end{equation*}
Given that the vector of probabilities $\sigma_n$ converges,  $C_n$ converges almost surely to a deterministic matrix $C$ (which is diagonal since $C_n^{is,jr} \to 0$ when $i=j$ and $s \neq r$). Moreover, $C$ is positive definite since
\begin{equation*}
    C_n^{is,jr}= \frac{\sigma_n^{is}}{(\lambda_n^{is})^2}\left [  \escon{(G^i(s, s_{n+1}^{-i}))^2}{\mathcal F_n} - \sigma_n^{is}(G^i(s, \sigma_{n}^{-i}))^2  + \sigma_n^{is}(1-\sigma_n^{is})x_n^{is}(x_n^{is} - G^i(s, \sigma_n^{is})) \right ]+  O \left ( \frac{1}{n} \right) ,
\end{equation*}
and therefore
\begin{equation*}
  \begin{aligned}
    C^{is,jr}&= \frac{1}{\sigma_*^{is}}\left [  \sum_{s^{-i} \in S^{-i}}(G^i(s, s^{-i}))^2\sigma_*^{-i}(s^{-i}) - \sigma_*^{is}(G^i(s, \sigma_{*}^{-i}))^2 \right ]\\
& > \frac{1}{\sigma_*^{is}}\left [  \sum_{s^{-i} \in S^{-i}}(G^i(s, s^{-i}))^2\sigma_*^{-i}(s^{-i}) - (G^i(s, \sigma_{*}^{-i}))^2 \right ] \geq 0,
\end{aligned}
\end{equation*}
from the fact that $0<\sigma_*^{is}<1$ and the convexity of $x^2$. Hence we can conclude that  $\escon{\epsilon_{n+1}^T\epsilon_{n+1}}{\mathcal F_n}$ converges to a positive definite deterministic matrix and that $\sqrt{n}((x_n,\lambda_n) - (x_*,\lambda_*))$ converges in distribution to a normal random variable (see e.g.. \cite[Theorem 3.3.2]{chen02}). For \eqref{algocms}, considering the continuous function $C(x)=\escon{\tilde U_{n+1}^T\tilde U_{n+1}}{x_n=x}$ and slightly modifying the proof of \cite[Theorem 2.2.12]{duflo97}, it can be shown that $n^{|\rho(\nabla \Phi(x_*))|}(x_n - x_*)$ converges almost surely to a finite random variable if $0<|\rho(\nabla \Phi(x_*))|<1/2$. For instance, considering the game defined by \eqref{matri_j}, we have that $|\rho(\nabla \Phi(x_*))| \approx 0.3$. Figure~\ref{mono} depicts the results of a numerical experience in this particular example where $2\eta\alpha=0.8$.\\

 Second, observe that a better rate can be achieved for \eqref{algocms} if the step size is given by $\gamma_n=\frac{a}{n}$ for $a > |\rho(\nabla \Psi(x_*))|$. This leads to the rate $o(n^{-\delta})$ for all $\delta \in (0,1/2)$. However it is somewhat unrealistic to assume that the players have this information in advance. Nevertheless, we always have that $|\rho(\nabla \Phi(x_*))| < |\rho(\nabla \Psi(x_*,\lambda_*))|$ and thus the scheme \eqref{APD} can reach at least the same path-wise rate of convergence under the hypotheses of Proposition~\ref{rate} and regardless of the step size considered. 
\begin{equation}
 \begin{pmatrix}
(0,0) & (1,0) & (0,1)\\ 
(0,1) & (0,0) & (1,0)\\
(1,0) & (0,1) & (0,0) 
 \end{pmatrix}  \label{matri_j}
 \end{equation}
\begin{figure}[h]
\centering
\includegraphics[scale=0.7]{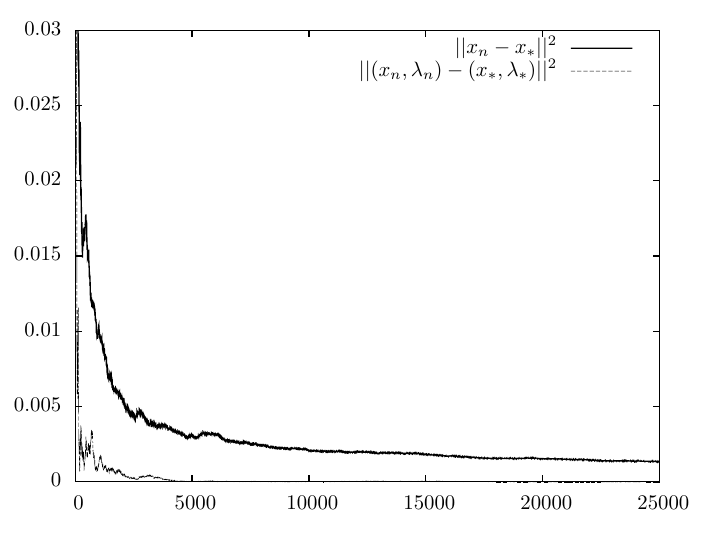}
\caption{$\norm{(x_n,\lambda_n) - (x_*, \lambda_*)}^2$ versus $\norm{x_n - x_*}^2$.\label{mono}}
\end{figure} 

\subsection{Convergence with positive probability}
We use the estimates given by Lemma~\ref{eigenvalues} to extend the range of parameters where general convergence results can be obtained for the process \eqref{APD}. We start by showing that there exists a unique rest point of \eqref{dincont} which is stable if $1 \leq 2\eta \alpha<2$. Let $\mathcal Y \subseteq X \times \Delta$ be the set of rest points of \eqref{dincont} and let $B(\mathcal A)$ be the basin of attraction corresponding to an attractor $\mathcal A$.
\begin{proposition} \label{local}
 Assume that $1 \leq 2\eta \alpha<2$. Then, there exists a unique rest point $(x_*,\lambda_*)$ for the dynamics \eqref{dincont} which is an attractor. 
\end{proposition}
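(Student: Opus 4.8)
The plan is to separate the two assertions---\emph{attractor} and \emph{uniqueness}---and to extract both from the Gershgorin estimates already established in Lemma~\ref{eigenvalues}, now read off over the whole window $\eta\alpha<1$ rather than only $\eta\alpha<1/2$. By Remark~\ref{dinamicas}, a pair $(x_*,\lambda_*)$ is a rest point of \eqref{dincont} if and only if $\lambda_*=\sigma(x_*)$ and $x_*$ is a fixed point of the map $F$ defined in \eqref{f}; so it suffices to analyse the zeros of $F-\Id$.

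For the attractor property I would simply rerun the computation of Lemma~\ref{eigenvalues} at an arbitrary rest point. The Jacobian $\nabla\Psi(x_*,\lambda_*)$ has the block lower-triangular shape \eqref{matriz}, with diagonal blocks $\nabla_x\Psi_x(x_*,\lambda_*)$ and $-I$, and the spectrum of the first block lies in the disc $\{z\in\CC:|z+1|\le\eta\alpha\}$ by \eqref{jugon}. The only new observation needed is that, since now $\eta\alpha=\tfrac12(2\eta\alpha)<1$, this disc is contained in the open left half-plane, so $\rho(\nabla\Psi(x_*,\lambda_*))<0$. Hence every rest point is linearly asymptotically stable, i.e.\ an attractor. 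This is exactly the range $2\eta\alpha<2$, and it is where the sharper radius $\eta\alpha$---rather than the cruder bound used for the global contraction---pays off.

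The substantive part is uniqueness, and here I would use a fixed-point index argument rather than a contraction principle, since for $2\eta\alpha\ge1$ the map $F$ is no longer known to be a global contraction and Banach's theorem is unavailable. At any fixed point $x_*$ the matrix $\nabla F(x_*)$ has zero diagonal (because $F^{is}$ does not depend on $x^i$) and row sums bounded by $\eta\alpha$, exactly as computed in Lemma~\ref{eigenvalues}; thus all its eigenvalues $\mu$ satisfy $|\mu|\le\eta\alpha<1$. Consequently every eigenvalue $1-\mu$ of $\nabla(\Id-F)(x_*)=I-\nabla F(x_*)$ has positive real part, so $\det\big(I-\nabla F(x_*)\big)>0$; in particular each fixed point is nondegenerate, hence isolated, with fixed-point index $+1$. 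On the other hand $F$ is bounded and maps a large closed box $B$ into its interior, so $\Id-F$ is homotopic on $\partial B$ to $\Id$ and the total index, i.e.\ the degree of $\Id-F$ on $B$, equals $1$. Since every index equals $+1$ and they sum to $1$, there is exactly one fixed point $x_*$, giving the unique rest point $(x_*,\sigma(x_*))$.

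The main obstacle is precisely this uniqueness step: the local spectral estimate of Lemma~\ref{eigenvalues} controls $F$ only near its equilibria and does not by itself forbid several coexisting rest points. What saves the argument is that the same estimate simultaneously forces each equilibrium to have index $+1$, so that the global constraint ``degree $=1$'' (coming from the trivial fact that $F$ is an inward-pointing self-map) can be met only by a single zero. Note that this yields uniqueness together with local stability but \emph{not} a global attractor, which is consistent with the fact that in this regime convergence can be claimed only with positive probability.
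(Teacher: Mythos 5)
Your argument is correct, and it reaches the same two conclusions by the same underlying topological principle as the paper, but the execution differs in a way worth recording. For the attractor claim you both read the Gershgorin disc \eqref{jugon} at an arbitrary rest point and observe that $\eta\alpha<1$ pushes the spectrum of $\nabla_x\Psi_x(x_*,\lambda_*)$ into the open left half-plane; the paper then goes one step further and exhibits an explicit local Lyapunov function by solving the Lyapunov equation $\nabla\Psi^TD+D\nabla\Psi=-I$, whereas you rest on linearized stability, which is equally valid. For uniqueness the paper first argues that $\mathcal Y$ is finite (non-overlapping basins plus compactness of $X\times\Delta$) and then invokes the Poincar\'e--Hopf theorem to rule out several coexisting stable equilibria; you instead pass through Remark~\ref{dinamicas} to the fixed points of $F$ and run a Brouwer degree count for $\Id-F$ on a large box: the same Gershgorin bound gives $|\mu|\le\eta\alpha<1$ for the eigenvalues of $\nabla F(x_*)$ (zero diagonal, row sums $\le\eta\alpha$), hence $\det(I-\nabla F(x_*))>0$ and local index $+1$ at every fixed point, while boundedness of $F$ forces the total degree to be $1$. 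This buys you three things the paper leaves implicit: isolation and finiteness of the rest points come for free from nondegeneracy rather than from a separate basin argument; existence is delivered by the nonvanishing degree rather than assumed; and the index computation takes place in a Euclidean box, sidestepping the delicacy of applying Poincar\'e--Hopf to the vector field $\Psi$ on the manifold-with-corners $X\times\Delta$. Your closing remark that this gives only a local, not global, attractor is also the correct reading of why the regime $1\le 2\eta\alpha<2$ yields convergence merely with positive probability.
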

\begin{proof}
Let $(x_*, \lambda_*) \in \mathcal Y$. If $1 \leq 2\eta \alpha<2$, equation \eqref{jugon} shows that matrix $\nabla \Psi(x_*,\lambda_*)$ is stable. To prove that $\{(x_*, \lambda_*)\}$ is an attractor, take $V(x,\lambda)=((x,\lambda) - (x_*, \lambda_*))^T D((x, \lambda) -(x_*, \lambda_*))$ as a (local) Lyapunov function where, for instance, $D$ is  the positive definite solution of the Lyapunov equation $\nabla \Psi(x_*,\lambda_*)^TD + D\nabla \Psi(x_*,\lambda_*)=-I$. Given the fact that basins of attraction cannot overlap, $\mathcal Y$ is finite since $X \times \Delta$ is compact and $\Psi$ is regular. Finally, $\mathcal Y$ reduces to one point since, in this case, it is impossible to have finitely many stable equilibibria due to the Poincar\'e--Hopf Theorem (see e.g., Milnor~\cite[Chapter 6]{milnor97}).
\end{proof}

\indent The following definition is crucial to ensure convergence with positive probability of the process $(x_n,\lambda_n)_n$ to a given (not necessarily global) attractor. 
\begin{definition} Let $(z_n)_n$ be a discrete stochastic process with state space $Z$. A point $z \in Z$ is attainable by  $(z_n)_n$ if for each $m \in \mathbb N$ and every open neighborhood $U$ of $z$, $\mathbb P( \exists n\geq m, \,\, z_n \in U )>0$.
\end{definition}

\indent The following lemma relies on the particular form of the updating rule \eqref{algo} considered in this work.
\begin{lemma} \label{attain}
Fix $\lambda=(\lambda^1, \ldots, \lambda^N) \in \Delta$. Set $x^i \in \RR^{|S^i|}$ such that $x^{is}= G^i(s,\lambda^{-i})$ for all $s \in S^i$ and set $x=(x^1, \ldots, x^N) \in X$. Then, $(x,\lambda) \in X\times \Delta$ is attainable by the process $(x_n, \lambda_n)_n$. In particular, any rest point of \eqref{dincont} is attainable.
\end{lemma}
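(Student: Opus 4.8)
The plan is to reduce attainability to the construction of a single deterministic trajectory of pure profiles that the process can follow with positive probability. The key enabling fact is that, by \eqref{hip_sigma}, the decision rule is bounded away from zero, say $\sigma^{is}(\cdot)\geq\xi>0$ for all $i\in A$ and $s\in S^i$. Hence, if $(\hat s_p)_{p\geq 1}$ is any prescribed sequence of pure profiles, the event $E_N=\{s_p=\hat s_p,\ p=1,\dots,N\}$ has probability at least $\xi^{N|A|}>0$: conditionally on the past actions being $\hat s_1,\dots,\hat s_{p-1}$, the players randomize (conditionally independently) with the deterministically determined probabilities $\sigma^{i\hat s_p^i}(\hat x_{p-1}^i)$. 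Moreover, since the realized payoffs $g_p^i=G^i(\hat s_p)$ are deterministic functions of the profile, on $E_N$ the whole trajectory $(x_n,\lambda_n)_{n\leq N}$ coincides with a deterministic sequence $(\hat x_n,\hat\lambda_n)_{n\leq N}$ read off from $(\hat s_p)$. Consequently, once we exhibit $(\hat s_p)$ with $(\hat x_n,\hat\lambda_n)\to(x,\lambda)$, then for every $m$ and every neighbourhood $U$ of $(x,\lambda)$ it suffices to pick $N\geq m$ with $(\hat x_N,\hat\lambda_N)\in U$ and note $\mathbb P(\exists n\geq m,\ (x_n,\lambda_n)\in U)\geq\mathbb P(E_N)>0$.

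The construction exploits the per-action averaging built into \eqref{algo}, which is precisely the feature absent from \eqref{algocms_pre}. Unwinding the recursion, $\hat x_n^{is}$ equals, up to a term vanishing once the count of $s$-plays tends to infinity, the average of the payoffs $G^i(s,\hat s_p^{-i})$ over the stages $p\leq n$ at which player $i$ played $s$. Because each such observation carries a weight of order $1/(\text{number of plays of }s)$ rather than $1/n$, this average depends only on the empirical distribution of the opponents' profiles over the $s$-plays, irrespective of when those plays occur in global time. Since $G^i(s,\cdot)$ is the multilinear extension and $G^i(s,\lambda^{-i})=\sum_{a\in S^{-i}}G^i(s,a)\prod_{j\neq i}\lambda^{ja^j}$, it follows that $\hat x_n^{is}\to G^i(s,\lambda^{-i})=x^{is}$ as soon as $s$ is played infinitely often and the empirical law of the opponents over those plays converges to the product measure $\prod_{j\neq i}\lambda^j$.

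It therefore remains to choose the profiles. First take a bulk sequence whose joint empirical distribution on the finite set $S$ converges to the product law $\bigotimes_{i\in A}\lambda^i$; such a deterministic equidistributing sequence exists for any target distribution on a finite set. This gives $\hat\lambda_n\to\lambda$, and for every pair $(i,s)$ with $\lambda^{is}>0$ the conditional empirical law of the opponents given that $i$ plays $s$ converges to $\prod_{j\neq i}\lambda^j$ (the conditioning event has limiting frequency $\lambda^{is}>0$), whence $\hat x_n^{is}\to G^i(s,\lambda^{-i})$ by the previous paragraph. The only remaining coordinates are the null actions $\lambda^{is}=0$: for each we reserve a pairwise disjoint set of stages of zero asymptotic density on which player $i$ is forced to play $s$ while the others equidistribute to $\prod_{j\neq i}\lambda^j$. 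Being of zero density these insertions perturb none of the bulk limits, they keep $\hat\lambda_n^{is}\to 0=\lambda^{is}$, and — making $s$ played infinitely often with the correct limiting opponent profile — they force $\hat x_n^{is}\to G^i(s,\lambda^{-i})$ as well. This yields $(\hat x_n,\hat\lambda_n)\to(x,\lambda)$ and closes the argument; the final assertion is then immediate, since at a rest point of \eqref{dincont} one has $\lambda^{is}=\sigma^{is}(x^i)>0$ for every $(i,s)$ by \eqref{hip_sigma}, so only the bulk construction is needed and the null-action step is vacuous.

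I expect the genuine difficulty to be the combinatorial heart of the construction: producing a deterministic profile sequence whose joint empirical converges to the product law and whose per-action conditional empiricals converge to the correct product marginals, together with the disjoint zero-density interleaving handling the null actions without disturbing the bulk frequencies. By contrast, the positivity of $\mathbb P(E_N)$, the deterministic wash-out of the initial perception through the per-action averaging, and the passage from a convergent deterministic trajectory to attainability for every $m$ are routine.
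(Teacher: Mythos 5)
Your proof is correct and follows essentially the same route as the paper: positive probability of any prescribed finite sequence of pure profiles via \eqref{hip_sigma}, the observation that the per-action averaging in \eqref{algo} makes $x_n^{is}$ depend only on the conditional empirical distribution of the opponents over the stages where $i$ plays $s$, and a deterministic sequence realizing the target frequencies. The only cosmetic difference is that the paper implements the equidistribution concretely via blocks built from rational approximations $\tilde k^i_s$ of $\lambda^{is}$ (which also covers $\lambda^{is}=0$ implicitly), whereas you invoke an abstract equidistributing sequence and treat the null actions by separate zero-density insertions.
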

\begin{proof}
The fact that $\sigma_n^{is} \geq \xi>0$ for every $i \in A$, $s \in S^i$ and $n \in \NN$ implies that any finite sequence generated by \eqref{APD} has positive probability. The updating rule \eqref{algo} can be expressed almost surely, for $n$ sufficiently large, as
\begin{equation}
x_{n+1}^{is}= \frac{1}{\theta_n^{is}}\big (g_{\upsilon^{is}(\theta_n^{is})}^{i} + g_{\upsilon^{is}(\theta_{n}^{is}-1)}^{i}+ \cdots +g_{\upsilon^{is}(1)}^{i} + x_0^{is} \big ) + O \left ( \frac{1}{n}\right ), \label{media}
\end{equation}
where $\upsilon^{is}(k)=\inf \{q\geq 1, \, \theta_q^{is}=k\}$, i.e., the stage when player $i$ has played $s \in S^i$ for the $k$-th time. Observe that we can assume that $m=0$ in the definition of attainability. Let $\zeta_n^{\mathbf s}$ be the number of times that the action profile $\mathbf s \in S$ has been played up to time $n$. Hence, for every $i \in A$ and $s \in S^i$, \eqref{media} implies that
\begin{equation*}
 x_{n+1}^{is}  = \displaystyle{ \sum_{\mathbf r \in S^{-i}}G^i(s, \mathbf r) \frac{\zeta_n^{(s,\mathbf r)}}{\theta_n^{is}} + \overline b_n },
\end{equation*} 
with $\overline b_n=O \big ( (\theta_n^{is})^{-1} \big )$. Observe that $\theta_n^{is} \to +\infty$ almost surely, due to the conditional Borel--Cantelli lemma. Fix $\varepsilon >0$ and let $n$ be an integer such that $k^i_{s}=n \tilde k^i_{s}\in \mathbb N$, where, for every $i \in A$ and $s \in S^i$, $\tilde k^i_{s}$ denotes a rational number satisfying that  $|\lambda^{is}-\tilde k^i_{s}|<\varepsilon$. For a profile $\mathbf s \in S$, let us define the positive integers $n_\mathbf s=\prod_{i \in A}k^i_{\mathbf s^i}$ and $\overline n=\sum_{\mathbf s \in S}n_\mathbf s$. Now we take the sequence generated by \eqref{APD} defined by $l \in \NN$ blocks of size $\overline n$ where within each block, each $\mathbf s \in S$ is played exactly $n_\mathbf s$ times, regardless of the order of play. Fix $i \in A$ and $\mathbf r \in S^{-i}$, so that, by construction
\begin{equation*}
\frac{\zeta_{l \overline n}^{(s,\mathbf r)}}{\theta_{l \overline n}^{is}} = \frac{\prod_{j \neq i }k^j_{\mathbf r^j}k^i_s}{k^i_s\sum \limits_{\mathbf{u} \in S^{-i}}\prod_{j \neq i} k^j_{\mathbf u^j}}= \prod_{j \neq i}\lambda^{j\mathbf r^j} + \tilde b_\varepsilon,
\end{equation*}
\noindent where $\tilde b_\varepsilon \to 0$ as $\varepsilon \to 0$. Finally, given $\varepsilon'>0$, set $l$ large and $\varepsilon$ small to have $\norm{(x_{ln+1}, \lambda_{ln+1})-(x,\lambda)} < \varepsilon'$. 
\end{proof}

\indent Recall that $\mathcal L(z_n)$ is the limit set of sequence $(z_n)_n$. The following result is the goal of this subsection.
\begin{proposition} \label{pos_proba}
If an attractor $\mathcal A$ for the dynamics \eqref{dincont} satisfies that $B(\mathcal A)\cap \mathcal Y \neq \emptyset$, then $\mathbb P(\mathcal L(x_n,\lambda_n)\subseteq \mathcal A)>0$. In particular, under the Logit decision rule \eqref{logit_rule}, if $1 \leq 2\eta \alpha<2$, then $\mathcal Y$ reduces to one point $(x_*, \lambda_*)$ and $\mathbb P((x_n,\lambda_n) \to (x_*,\lambda_*))>0$. 
\end{proposition}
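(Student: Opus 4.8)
The plan is to deduce positive-probability convergence from three ingredients already available: the almost sure convergence of \eqref{APD} to an ICT set (Proposition~\ref{conv_ICT}), the attainability of rest points (Lemma~\ref{attain}), and the standard shadowing machinery of stochastic approximation. The conceptual reduction rests on the following property of ICT sets, which I would first record and attribute to Bena\"im~\cite{benaim99}: if $L$ is ICT for \eqref{dincont} and $L \cap B(\mathcal A) \neq \emptyset$, then $L \subseteq \mathcal A$. Indeed, ICT sets are invariant and \emph{attractor-free}; since $B(\mathcal A)$ is invariant, any point of $L \cap B(\mathcal A)$ has a forward orbit that stays in $L$ and converges to $\mathcal A$, so $\mathcal A \cap L$ is a nonempty attractor for the flow restricted to the invariant compact set $L$, which forces $\mathcal A \cap L = L$. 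Writing $L = \mathcal L(x_n,\lambda_n)$, Proposition~\ref{conv_ICT} makes $\{L \text{ is ICT}\}$ an almost sure event, whence
$$\mathbb P(L \subseteq \mathcal A) \ \geq\ \mathbb P\big(L \cap B(\mathcal A) \neq \emptyset \big),$$
and it suffices to show the right-hand side is positive.

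Next I would exploit the hypothesis. By assumption there is a rest point $p \in B(\mathcal A) \cap \mathcal Y$, and Lemma~\ref{attain} guarantees that $p$ is attainable by $(x_n,\lambda_n)_n$. Let $\Phi$ be the flow of \eqref{dincont}. Since $\mathcal A$ is an attractor I would fix a trapping neighbourhood, i.e. an open set $U \supseteq \mathcal A$ with $\overline U \subseteq B(\mathcal A)$, $\Phi_t(\overline U) \subseteq U$ for $t>0$ and $\bigcap_{t \geq 0}\Phi_t(\overline U) = \mathcal A$. As $p \in B(\mathcal A)$, its orbit enters $U$ at some time $T$, so by continuity of the flow there is an open neighbourhood $V \ni p$ with $\Phi_T(V) \subseteq U$.

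The core step is a shadowing argument. Because \eqref{APD} is a Robbins--Monro algorithm whose noise meets the conditions verified in the proof of Proposition~\ref{conv_ICT}, its interpolated process is an asymptotic pseudotrajectory of $\Phi$; quantitatively, conditional on the process lying in $V$ at a sufficiently late stage, it tracks the flow over the window of length $T$ closely enough to enter $U$, and then, iterating over successive windows using the forward invariance of $U$ together with the vanishing step-size $\gamma_n = 1/n$, it remains in $\overline U \subseteq B(\mathcal A)$ forever -- so that $L \cap B(\mathcal A) \neq \emptyset$ -- all with conditional probability at least some $c>0$. Attainability of $p$ gives, for the relevant threshold $N$, that $\mathbb P(\exists\, n \geq N,\ (x_n,\lambda_n) \in V) > 0$; feeding this into the shadowing estimate via a stopping-time decomposition yields $\mathbb P(L \cap B(\mathcal A) \neq \emptyset) > 0$, hence $\mathbb P(L \subseteq \mathcal A) > 0$. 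I expect this quantitative shadowing/trapping estimate -- controlling the accumulated martingale noise over the trapping iterations so the process cannot escape $U$ -- to be the main obstacle; it is exactly the content supplied by the theory of Bena\"im~\cite{benaim99}, so in a full write-up it would be invoked as a citation rather than reproved.

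Finally, for the displayed particular case I would invoke Proposition~\ref{local}: when $1 \leq 2\eta\alpha < 2$ the set $\mathcal Y$ reduces to a single point $(x_*,\lambda_*)$ which is an attractor. Taking $\mathcal A = \{(x_*,\lambda_*)\}$, the hypothesis $B(\mathcal A)\cap \mathcal Y \neq \emptyset$ holds trivially, and the inclusion $\mathcal L(x_n,\lambda_n) \subseteq \mathcal A$ is the same as $(x_n,\lambda_n) \to (x_*,\lambda_*)$, giving $\mathbb P\big((x_n,\lambda_n)\to(x_*,\lambda_*)\big)>0$.
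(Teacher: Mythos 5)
Your proposal is correct and follows essentially the same route as the paper: the paper's proof consists precisely of combining Lemma~\ref{attain} (attainability of the rest point in $B(\mathcal A)$), the conditional shadowing estimate of Lemma~\ref{tecnico}, and Proposition~\ref{local} for the particular case, and then invoking Bena\"im~\cite[Theorem 7.3]{benaim99}. What you have done is unpack the internal mechanism of that cited theorem (the ICT/attractor-free reduction, the trapping neighbourhood, and the stopping-time decomposition), correctly identifying the quantitative conditional tracking estimate as the one ingredient to be imported from Bena\"im's theory --- which is exactly the role Lemma~\ref{tecnico} plays in the paper.
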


\indent Before providing the proof we need, to briefly introduce the following concepts. Let $\phi$ be the semi-flow induced by the differential equation \eqref{dincont} and $Y_t$ the continuous time affine process associated with the discrete process $(x_n,\lambda_n)_n$, i.e.,
\begin{equation}
 Y(\tau_n +u)=(x_{n},\lambda_n) + u \frac{(x_{n+1},\lambda_{n+1})-(x_n,\lambda_n)}{\tau_{n+1}-\tau_{n}},
\end{equation}
for all $n \in \NN$ and $u \in [0, \frac{1}{n+1})$, where $\tau_n=\sum_{m=1}^{n}\frac{1}{m}$.  Let $(\mathcal F_t)_{t \geq 0}$ be the natural associated filtration.\\
\indent The following technical lemma is now needed. We omit the proof because we keep strictly to the lines of Bena\"im~\cite[Proposition 4.1]{benaim99} along with the explicit computations provided in the proof of Schreiber~\cite[Theorem 2.6]{schreiber01}.
\begin{lemma} \label{tecnico}
For all $T>0$ and $\delta>0$, 
\begin{equation*}
\mathbb P \bigg (\sup \limits_{u \geq t} \big [ \sup \limits_{0 \leq h \leq T}\norm{Y(u +h)-\phi_h(Y(u))} \big ]\geq \delta \,\, \vert \, \, \mathcal F_t \bigg ) \leq \frac{C(\delta,T)}{\exp(ct)}, \end{equation*} 
\noindent for some positive constants $c$ and $C(\delta,T)$ when $t \geq 0$ is large enough.
\end{lemma}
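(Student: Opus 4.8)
The plan is to establish that the interpolated process $Y$ is an \emph{asymptotic pseudotrajectory} of the semi-flow $\phi$, but equipped with an explicit exponential tail on the probability of a large deviation, following Bena\"im~\cite[Proposition~4.1]{benaim99} and quantifying the bound as in Schreiber~\cite[Theorem~2.6]{schreiber01}. First I would rewrite the scheme \eqref{APD} in the compact form $z_{n+1}-z_n=\gamma_{n+1}\big(\Psi(z_n)+\epsilon_{n+1}\big)$ with $z_n=(x_n,\lambda_n)$, $\gamma_n=\tfrac1n$, and, recalling \eqref{noise}, decompose the noise as $\epsilon_{n}=\xi_{n}+r_n$ where $(\xi_n)_n$ is a martingale difference sequence ($\escon{\xi_{n+1}}{\mathcal F_n}=0$) and $r_n=O(1/n)$. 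Fixing a continuous time $u$ with discrete counterpart $m(u)$ (the largest $n$ with $\tau_n\leq u$), I would compare the piecewise-affine interpolation $Y$ on the window $[u,u+T]$ with the solution $\phi_h(Y(u))$ of \eqref{dincont}. Since $\Psi$ is Lipschitz and the flow depends continuously on initial data, a discrete Gronwall argument bounds
\[
\sup_{0\leq h\leq T}\norm{Y(u+h)-\phi_h(Y(u))}\;\leq\; e^{LT}\Big(\,\sup_{k}\,\norm{\sum_{i=m(u)}^{k}\gamma_{i+1}\,\xi_{i+1}}+\kappa(u)\Big),
\]
where $L$ denotes a Lipschitz constant of $\Psi$, the supremum in $k$ runs over indices with $\tau_{k}-\tau_{m(u)}\leq T$, and $\kappa(u)$ collects the deterministic discretization error together with the contribution of $(r_n)_n$, so that $\kappa(u)\to0$ as $u\to\infty$.

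It then remains to control the probability that the martingale term is large uniformly over $u\geq t$. I would partition $[t,\infty)$ into consecutive blocks of length $T$; on the block starting at $u$ the quantity $\sum_{i=m(u)}^{k}\gamma_{i+1}\xi_{i+1}$ is a sum of martingale increments which, by Lemma~\ref{unacotado} and the bounds $\norm{U_{n}^{is}}\leq C/\lambda_n^{is}$ and $\norm{M_n}\leq C'$ recalled in the proof of Proposition~\ref{conv_ICT}, are almost surely uniformly bounded. An exponential (Azuma--Freedman type) maximal inequality then bounds the block probability in terms of the block's quadratic variation $\sum_{i\geq m(u)}\gamma_i^2$, and a union bound over the blocks, together with the conditioning on $\mathcal F_t$, assembles the estimate. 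The quantitative mechanism---this is where Schreiber's explicit computation enters---is the logarithmic time change $\tau_n=\sum_{m\leq n}\tfrac1m\sim\log n$: the block anchored at continuous time $t$ corresponds to discrete index $n\sim e^{t}$, whence $\sum_{i\geq n}\gamma_i^2\sim 1/n\sim e^{-t}$, and feeding this variance into the exponential inequality yields a bound decaying at least like $C(\delta,T)\,e^{-ct}$.

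The main obstacle is precisely this probabilistic step: one must ensure the martingale increments are controlled well enough---uniformly in $n$---for an exponential concentration inequality to apply (here the a priori lower bound $\liminf_n\lambda_n^{is}\geq\overline\sigma$ from Lemma~\ref{unacotado} is essential to tame the $1/\lambda_n^{is}$ factor in $U_n$), and then to track how the $\tfrac1n$ step-size combined with the time change converts the $O(e^{-t})$ decay of the quadratic variation into the stated $e^{-ct}$ tail after summation over the blocks. The deterministic ingredients---the Gronwall estimate, the vanishing of $\kappa(u)$, and the bookkeeping of the Lipschitz constants---are routine and are exactly the computations carried out in the cited references, which is why only an outline is given.
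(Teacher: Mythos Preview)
Your proposal is correct and follows essentially the same route as the paper's own outline: decompose the noise as a bounded martingale difference plus an $O(1/n)$ remainder, invoke Bena\"im~\cite[Proposition~4.1]{benaim99} for the Gronwall-type bound on $\sup_{0\le h\le T}\|Y(u+h)-\phi_h(Y(u))\|$ in terms of $A(u,T)$ and a $1/m(u)$ term, and then feed this into Schreiber's exponential martingale/block argument (your Azuma--Freedman plus union bound over windows of length $T$, exploiting $\tau_n\sim\log n$) to obtain the $e^{-ct}$ tail. The paper simply points to these two references rather than unpacking them, whereas you spell out the mechanism---including correctly flagging that the $1/\lambda_n^{is}$ factor in $U_n$ must be tamed via Lemma~\ref{unacotado} to get uniformly bounded increments, which is exactly the point the paper records as ``$(\epsilon_n)_n$ is almost surely bounded''.
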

\begin{proof}[Proof of Proposition~\ref{pos_proba}] 
 In view of Proposition~\ref{local} and Lemmas \ref{attain} and \ref{tecnico} the result follows directly from Bena\"im~\cite[Theorem 7.3]{benaim99}.
\end{proof}
Note that for Lemma~\ref{attain} and for the first part of the statement in Proposition~\ref{pos_proba}, we have only assumed condition \eqref{hip_sigma} on decision rule $\sigma$.\\

 The following example shows that the first part of Proposition~\ref{pos_proba} is interesting in its own right. Consider the 2-player zero-sum game defined by the payoff
\begin{equation} \label{zero-sum}
G= \begin{pmatrix}
 0 & -1  \\ 
1 &  \phantom{-}0  \\
 \end{pmatrix}.
\end{equation}  
\noindent Let $(x_*, \sigma(x_*))$, with  $\sigma^1(x_*^1)=\sigma^2(x_*^2)=(1/(1 + e^\beta), e^\beta/(1 + e^\beta))$ and $x_*^1=x_*^2=(-e^\beta/(1 + e^\beta), 1/(1 + e^\beta))$, be the unique rest point of \eqref{dincont}. In this case, every eigenvalue of $\nabla \Psi(x_*, \sigma(x_*))$ is equal to -1. Then $\mathbb P((x_n,\lambda_n) \to (x_*, \sigma(x_*)))>0$ for all $\beta>0$.
\begin{remark}
  Observe that, for any zero-sum game, there exists a unique equilibrium. It is exactly the same proof as in \cite[Theorem 3.2]{hh05}, since if $(x, \lambda)$ is a rest point of \eqref{dincont}, then $\lambda$ is the unique rest point of the perturbed best response dynamics studied.
\end{remark}
\subsubsection*{ A traffic game} 
 The (almost sure or with positive probability) convergence to attractor results obtained when the Logit decision rule is considered are valid under the strong assumption $2\eta\alpha<2$. In fact, this condition becomes very difficult to verify as the number of players increases. Moreover, nonconvergence can occur for some games (see Section~\ref{nonconvergence} for details) if parameter $\eta \alpha$ is large. In this part, we will discuss the interesting application developed in Cominetti et al.~\cite[Section 3]{cms10} and we will show that a result in the spirit of Proposition~\ref{pos_proba} can be obtained under a much weaker condition.\\
\indent Consider a network with a topology that consists of a set of parallel routes. Each route $r \in \mathcal R$ in the network is characterized by an increasing sequence of values $c_1^r \leq \cdots \leq c_N^r$ where $c_u^r$ represents the average travel time when $r$ carries a load of $u$ users. The traffic game is defined as follows. The action set is common to all players, i.e., $S^i= \mathcal R$, for every $i \in A$ with $\mathcal R$ the set of available routes. The payoff to each player $i $, when action profile $\mathbf r \in \mathcal R^N$ is played (i.e., when the network is loaded by the configuration $\mathbf r$), is given by $-c_u^{\mathbf r^i}=G^i(\mathbf r)$, that is, minus her travel time.\\
\indent  This traffic game is shown to be a potential game in the sense that there exists a function $\Lambda: [0,1]^{N \times |\mathcal R|} \to \RR$ such that
\begin{equation*}
 \frac{\partial\Lambda}{\partial \lambda^{is}}(\lambda)= G^{i}(s,\lambda^{-i}),
\end{equation*}
for every $\lambda \in \Delta$. Explicitly, the function $\Lambda$ is given by
\begin{equation}
 \Lambda(\pi)= -\mathbb E_\pi \big [ \sum_{r \in \mathcal R} \sum_{u=1}^{U^r}c_u^r\big ], \label{lambda}
\end{equation}
where the expectation is taken with respect to random variables $U^r=\sum_{i \in A}X^{ir}$ with $X^{ir}$ independent Bernouilli variables such that $\mathbb P(X^{ir}=1)=\pi^{ir}$. It is also shown that the second derivatives of $\Lambda$ are zero except for 
\begin{equation}
\frac{\partial^2 \Lambda}{\partial \pi^{jr} \partial \pi^{ir}}(\pi)=\mathbb E_\pi \big ( c_{U_{ij}^r+1}^r - c_{U_{ij}^r+2}^r \big ) \in [-\eta,0], 
\end{equation}
$i \neq j$, where $U_{ij}^r=\sum_{k\neq i,j}X^{kr}$. Notice that this notion does not correspond to the standard Monderer and Shapley's \cite{ms96} notion of a potential game.\\
\indent We suppose that the smoothing parameters are identical for all players, i.e., $\beta_i=\beta$ for every $i \in A$. Note that, in this framework, $\eta$ (defined in \eqref{eta}) translates to
\begin{equation}
\eta= \max \{ \eta_u^r \, ; \, r \in \mathcal R,\,\, 2 \leq u \leq N \}= \max \{ c_u^r - c_{u-1}^r\, ; \, r \in \mathcal R,\,\, 2\leq u \leq N\}.
\end{equation}
\indent Cominetti et al.~\cite{cms10} obtain the following result.
\begin{proposition} \label{cms_traffic}
If $\eta \beta<1$, then \eqref{dincontcms} has a unique rest point $x_* \in X$ which is symmetric in the sense that $x_*=(\hat x, \ldots, \hat x)$. Furthermore, $\{ x_*\}$ is an attractor for \eqref{dincontcms}.
\end{proposition}

\begin{remark}
The strong requirement (also for the model in \cite{cms10}) on the smoothing parameter in order to ensure uniqueness of equilibrium, can make the prediction of the model very different from the set of Nash equilibria of the stage game. For instance, this is the case in the  two-player congestion game with two links represented by the matrix
\begin{equation*}
 G=\begin{pmatrix}
-2 & -1 \\ 
-1 & -3\\ 
 \end{pmatrix},
 \end{equation*}
where there are two strict equilibria (one player on each route) and one symmetric mixed Nash equilibrium $\hat \sigma=(2/3,1/3)$. In this particular case, we can check numerically that if $\beta \leq \beta^*=0.99$ then there exists a unique equilibrium point $(\sigma_*, x_*)$ of our model. Observe that, naturally, this range is larger than the one derived from our general result $\beta<\eta^{-1}=0.5$. When taking the value $\beta^*$, we have that $\sigma_*^1=\sigma_*^2 =(0.5709,0.4290)$, which is far from the Nash equilibrium $\hat \sigma$. 
\end{remark}

\indent The previous proposition  provides a much weaker condition for the existence and uniqueness of a rest point of \eqref{dincontcms}. Observe also that, despite the fact that the second part yields the existence of an attractor, no convergence result is obtained for the discrete process \eqref{algocms}. The next result shows that, under the assumption $\eta \beta<1$, an additional result can be obtained for \eqref{APD}.

\begin{proposition}
If $\eta \beta<1$, \eqref{dincont} has a unique rest point $(x_*,\lambda_*) \in X \times \Delta$ which is symmetric in the sense that $x_*=(\hat x, \ldots, \hat x)$ and $\lambda_*=(\hat \lambda, \ldots, \hat \lambda)=\sigma(x_*)$. Furthermore,  $\{(x_*,\lambda_*)\}$ is an attractor for \eqref{dincont} and $\mathbb P((x_n, \lambda_n) \to (x_*,\lambda_*))>0$.
\end{proposition}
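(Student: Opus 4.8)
The plan is to settle the rest point, then show that $\{(x_*,\lambda_*)\}$ is an attractor of \eqref{dincont}, and finally invoke Proposition~\ref{pos_proba}. For the first step, Remark~\ref{dinamicas} identifies the rest points of \eqref{dincont} with the pairs $(x,\sigma(x))$ for $x$ a rest point of \eqref{dincontcms}, and Proposition~\ref{cms_traffic}(ii) gives, under $\eta\beta<1$, a unique symmetric $x_*=(\hat x,\ldots,\hat x)$. Since all players use the common smoothing parameter $\beta$ and the common strategy set $\mathcal R$, the Logit rule \eqref{logit_rule} applies the same function to each player, so $\lambda_*=\sigma(x_*)=(\hat\lambda,\ldots,\hat\lambda)$ with $\hat\lambda=\sigma^i(\hat x)$; symmetry is thus inherited, and the rest point set $\mathcal Y$ reduces to the single point $(x_*,\lambda_*)$.

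For the attractor property I would show that the Jacobian $\nabla\Psi(x_*,\lambda_*)$ is stable and then conclude exactly as in Proposition~\ref{local}, solving the Lyapunov equation to produce a local quadratic Lyapunov function. By the block-triangular form \eqref{matriz} obtained in the proof of Lemma~\ref{eigenvalues}, stability of $\nabla\Psi(x_*,\lambda_*)$ reduces to stability of the upper-left block $\nabla_x\Psi_x(x_*,\lambda_*)$. At the rest point one has $\lambda_*^{is}=\sigma_*^{is}$ and $F^{is}(x_*)=x_*^{is}$, so the product-rule cross terms (which carry the vanishing factor $F^{is}-x^{is}$) drop out and the factor $\sigma^{is}/\lambda^{is}$ equals $1$; hence $\nabla_x\Psi_x(x_*,\lambda_*)=\nabla F(x_*)-I$, whereas a parallel computation gives $\nabla\Phi(x_*)=\mathrm{diag}(\sigma_*^{is})(\nabla F(x_*)-I)$.

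The crux is the stability of $\nabla F(x_*)-I$. Because the traffic game is a potential game, $F=\nabla\Lambda\circ\sigma$ (with $\Lambda$ as in \eqref{lambda}), so $\nabla F(x_*)=\beta\,\nabla^2\Lambda(\lambda_*)\,S$, where $S$ is the block-diagonal Jacobian of the Logit map, symmetric positive semidefinite and positive definite on the invariant tangent space $\prod_i\Delta_0^i$. Conjugating by $S^{1/2}$ on that subspace shows $\nabla F(x_*)$ is similar there to the symmetric matrix $\beta\,S^{1/2}\nabla^2\Lambda(\lambda_*)S^{1/2}$, so all its eigenvalues are real (the kernel of $S$, i.e.\ the per-player constant shifts, contributes only the stable value $-1$ to $\nabla F(x_*)-I$). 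Thus $\nabla_x\Psi_x(x_*,\lambda_*)$ is stable if and only if the largest eigenvalue of $\beta\,\nabla^2\Lambda(\lambda_*)S$ is $<1$, equivalently $\beta\,\nabla^2\Lambda(\lambda_*)\prec S^{-1}$ on the tangent space. I expect \emph{this inequality} to be the main obstacle: the crude Gershgorin estimate of \eqref{jugon} only delivers it under the stronger hypothesis $\eta(N-1)\beta<1$, and obtaining it from the weaker $\eta\beta<1$ requires the fine structure of $\nabla^2\Lambda$ (entries in $[-\eta,0]$, organised by route) together with the symmetry of the equilibrium. This is precisely the strict concavity of the entropy-perturbed potential associated with $\overline G^i$ (Lemma~\ref{pert_game}) that underlies Proposition~\ref{cms_traffic}(ii), and I would isolate and prove (or directly import from that analysis) that $\eta\beta<1$ forces $\beta\,\nabla^2\Lambda(\lambda_*)\prec S^{-1}$. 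I stress that one cannot merely transfer stability from $\nabla\Phi(x_*)$: the positive-diagonal factor $\mathrm{diag}(\sigma_*^{is})$ does not commute with $S$, so $\nabla\Phi(x_*)$ need not even have real spectrum, and the stability of $\nabla F(x_*)-I$ must be argued in its own right.

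Once $\nabla\Psi(x_*,\lambda_*)$ is stable, the Lyapunov-equation construction of Proposition~\ref{local} makes $\{(x_*,\lambda_*)\}$ an attractor. Since $\mathcal Y=\{(x_*,\lambda_*)\}$, the point $(x_*,\lambda_*)$ lies in $B(\{(x_*,\lambda_*)\})\cap\mathcal Y\neq\emptyset$ and is attainable by Lemma~\ref{attain}, so Proposition~\ref{pos_proba} gives $\mathbb P(\mathcal L(x_n,\lambda_n)\subseteq\{(x_*,\lambda_*)\})>0$. As the limit set of the bounded sequence $(x_n,\lambda_n)_n$ is nonempty and here contained in a single point, this is exactly $\mathbb P((x_n,\lambda_n)\to(x_*,\lambda_*))>0$.
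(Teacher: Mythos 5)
Your skeleton coincides with the paper's: uniqueness and symmetry of the rest point via Remark~\ref{dinamicas} and Proposition~\ref{cms_traffic}(ii), reduction of the stability of $\nabla\Psi(x_*,\lambda_*)$ to its upper-left block through the block-triangular form \eqref{matriz}, and the conclusion via the Lyapunov-equation construction of Proposition~\ref{local} together with Lemma~\ref{attain} and Proposition~\ref{pos_proba}. You also correctly identify the crux --- that the Gershgorin bound \eqref{jugon} only gives stability under $\eta(N-1)\beta<1$, so the weaker hypothesis $\eta\beta<1$ must be exploited through the potential structure of $\Lambda$ and the symmetry of the equilibrium --- and you correctly warn that stability cannot be transferred from $\nabla\Phi(x_*)$.

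However, at exactly that crux you stop: you state that you ``would isolate and prove (or directly import)'' the inequality $\beta\,\nabla^2\Lambda(\lambda_*)\prec S^{-1}$, without doing either. This is the entire content of Lemma~\ref{pos_proba_traffic}, and it is not importable as stated: the paper explicitly remarks that the Lyapunov function of Cominetti et al.\ underlying Proposition~\ref{cms_traffic}(ii) ``does not seem suitable in our case,'' and instead \emph{modifies} their quadratic-form trick. Concretely, the paper first uses the symmetry $\lambda_*^{ir}=\lambda_*^{jr}$ to see that $J^\beta=\nabla_x\Psi_x(x_*,\lambda_*)$ is itself a symmetric matrix (no $S^{1/2}$-conjugation needed), and then shows $h^TJ^\beta h<0$ directly: writing $\partial^2\Lambda/\partial\pi^{jr}\partial\pi^{ir}=\mathbb E_{\lambda_*}(c^r_{U^r_{ij}+1}-c^r_{U^r_{ij}+2})$, substituting $Z^{ir}=h^{ir}\sqrt{(1-\lambda_*^{ir})/\lambda_*^{ir}}\,X^{ir}$ with the Bernoulli variables $X^{ir}$, bounding the increments by $-\eta^r_{U^r}\geq-\eta$, and completing the square via $\sum_{i\neq j}Z^{ir}Z^{jr}=(\sum_iZ^{ir})^2-\sum_i(Z^{ir})^2$, which yields strict negativity precisely when $\eta\beta<1$. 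Without this (or an equivalent) computation your argument establishes everything except the statement's actual hypothesis-to-conclusion link, so the proposal has a genuine gap at its central step.
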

\begin{proof}
The existence and uniqueness of the symmetric rest point of \eqref{dincont} follows from Remark~\ref{dinamicas} and Proposition~\ref{cms_traffic}. The rest of the proof (below) shows that matrix $\nabla \Psi(x_*,\lambda_*)$ is stable. Hence, $\{(x_*,\lambda_*) \}$ is an attractor for \eqref{dincont} and Proposition~\ref{pos_proba} applies.

Recall that $J_\beta=\nabla_x \Psi_x(x_*,\lambda_*)$ is the upper-left block of matrix $\nabla \Psi(x_*,\lambda_*)$ (see \eqref{matriz}). Observe that, from the definition of $\Psi_x$, the fact that $\sigma^i$ depends only on $x^i$ and \eqref{lambda}, the entries of $J_\beta$ are given by
\begin{align}
J^{is,jr}_\beta&= \sum_{k \in A}\sum_{r' \in \mathcal R}\frac{\partial^2 \Lambda}{\partial \pi^{kr'} \partial \pi^{is}}(\lambda_*)\frac{\partial \sigma^{kr'}}{\partial x^{jr}}(x_*)-\ind_{\{is=jr\}}=\sum_{r' \in \mathcal R}\frac{\partial^2 \Lambda}{\partial \pi^{jr'} \partial \pi^{is}}(\lambda_*)\frac{\partial \sigma^{jr'}}{\partial x^{jr}}(x_*)-\ind_{\{is=jr\}}\nonumber \\
&=\beta \lambda_*^{jr}(1-\lambda_*^{jr}) \mathbb E_{\lambda_*} \big ( c_{U_{ij}^r+1}^r - c_{U_{ij}^r+2}^r \big )\ind_{\{s=r, i \neq j\}}-\ind_{\{is=jr\}}. \label{hessiano}
\end{align}
Since $\lambda_*$ is symmetric ($\lambda^{ir}=\lambda^{jr}$, for all $i,j \in A$), $J_\beta$ is a symmetric matrix. Let us show that $J_\beta$ is negative definite by modifying the trick used in Cominetti et al.~\cite[Proposition 12]{cms10}. Take $h\in \RR^{N|\mathcal R|}\backslash \{0\}$, then, from \eqref{hessiano},
\begin{equation*}
h^T J_\beta h= \sum_{r \in \mathcal R} \big [\beta \sum_{i \neq j}h^{ir}\sqrt{\lambda_*^{ir}(1-\lambda_*^{ir})}h^{jr}\sqrt{\lambda_*^{jr}(1-\lambda_*^{jr})}\mathbb E_{\lambda_*} \big ( c_{U_{ij}^r+1}^r - c_{U_{ij}^r+2}^r \big ) - \sum_{i}(h^{ir})^2 \big ].
\end{equation*}
\indent For every $i \in A$ and $r \in \mathcal R$, put $v^{ir}=h^{ir}\sqrt{\frac{1-\lambda_*^{ir}}{\lambda_*^{ir}}}$, $Z^{ir}=v^{ir}X^{ir}$ and set $\eta_0^r=\eta_1^r=0$. Therefore,  
\begin{align*}
h^T J_\beta h&= \sum_{r \in \mathcal R} \bigg [\beta \sum_{i \neq j}v^{ir}v^{jr}\lambda_*^{ir}\lambda_*^{jr}\mathbb E_{\lambda_*} \big ( c_{U_{ij}^r+1}^r - c_{U_{ij}^r+2}^r \big ) - \sum_{i}\lambda_*^{ir}\frac{(v^{ir})^2}{1-\lambda_*^{ir}} \bigg ] \\
&= \sum_{r \in \mathcal R} \mathbb E_{\lambda_*} \bigg (\beta \sum_{i \neq j}Z^{ir}Z^{jr}( c_{U^r-1}^r - c_{U^r}^r \big ) - \sum_{i}\frac{(Z^{ir})^2}{1-\lambda_*^{ir}} \bigg )\\
&\leq  \sum_{r \in \mathcal R} \mathbb E_{\lambda_*} \bigg (-\eta_{U^r}^r \beta \bigg (\sum_{i}Z^{ir} \bigg )^2 + (\eta_{U^r}^r \beta - 1)\sum_{i}(Z^{ir})^2 \bigg ) <0, 
\end{align*}
\noindent where the last inequality follows by observing that $\eta_{U^r}^r \leq \eta$.
\end{proof}
\subsection{Nonconvergence} \label{nonconvergence}
In order to give an idea of the behavior of the stochastic process defined by \eqref{APD} when $\beta$ (we assume $\beta_i=\beta$ for all $i \in A$) becomes large, we provide a small class of games which underlines the relevance of the hypotheses considered throughout this paper. Consider a 2-player symmetric game, i.e., the action set $\overline S=S^1=S^2$ is common for both players and the payoffs verify that $G^1=(G^2)^T$. Let us assume that $G^1$ has constant-sum by row, which is, $\sum_r G^1(s,r)=k \in \RR$ for every $s \in \overline S$. It is easy to check that for this kind of game there exists a rest point of \eqref{dincont} which has the form $(\overline x, \sigma(\overline x)) \in X \times \Delta$ such that $\overline x^i=(1/k, \ldots, 1/k)$ for $i \in \{1,2\}$. We also assume that $\sum_sG^1(s,s) \neq k$.\\
\indent A game that satisfies the preceding conditions is the {\em good} (resp. {\em bad}) Rock-Scissors-Paper game 
\begin{equation*}\begin{pmatrix}
\phantom{-}0 & \phantom{-}a & -b\\ 
-b & \phantom{-}0 & \phantom{-}a\\
\phantom{-}a & -b & \phantom{-}0
 \end{pmatrix}, \label{RSP} \end{equation*}
\noindent where $ 0<b<a$ (resp. $ 0<a<b$) or the game \eqref{matri_j}.\\
\indent The (strong) hypotheses above ensure that at least one rest point of \eqref{dincont} does not depend on the parameter $\beta$. In the following we will easily show that if $\beta$ is sufficiently large then the rest point $(\overline x, \sigma(\overline x))$ becomes linearly unstable. Later, we will prove that this implies that $\mathbb P((x_n, \lambda_n) \to (\overline x, \sigma(\overline x)))=0$.
\begin{lemma} \label{parte_real}
If $\beta>0$ is sufficiently large, then there exists an eigenvalue $\mu$ of $\nabla\Psi(\overline x, \sigma(\overline x))$ such that $\PRe(\mu)>0$.
\end{lemma}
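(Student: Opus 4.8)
The plan is to reduce the spectral question to the upper-left block of $\nabla\Psi$ and then to diagonalize an explicit real symmetric matrix.

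First I would exploit the block structure \eqref{matriz}. At the rest point the matrix $\nabla\Psi(\overline x,\sigma(\overline x))$ is block lower-triangular, with diagonal blocks $J:=\nabla_x\Psi_x(\overline x,\sigma(\overline x))$ and $-I$. Its characteristic polynomial therefore factors as $\det(\mu I-J)\,(\mu+1)^{d}$, so its spectrum is that of $J$ together with the eigenvalue $-1$; it thus suffices to produce an eigenvalue of $J$ with positive real part.

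Next I would compute $J$ explicitly. Because every coordinate of $\overline x^i$ is the same, the Logit rule \eqref{logit_rule} assigns the uniform law $\sigma^{is}(\overline x^i)=1/m$, with $m=|\overline S|$, independently of $\beta$ --- this is precisely why the rest point does not move with $\beta$. Inserting $N=2$ and $\sigma_*^{is}=1/m$ into the entry formula of Lemma~\ref{eigenvalues} (with two players there are no third agents, so $G^i(s,r,\sigma_*^{-(i,j)})=G^i(s,r)$) gives diagonal blocks $-I$ and a $(1,2)$-block equal to $\tfrac{\beta}{m}C$, where $C=G^1-\tfrac{k}{m}\mathbf 1\mathbf 1^{T}$ has vanishing row sums. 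Using $G^1=(G^2)^{T}$ together with the constancy of the column sums of $G^1$ (which the symmetric uniform rest point forces to equal $k$), the $(2,1)$-block is $\tfrac{\beta}{m}C^{T}$. Hence $J=-I+\beta M$ with
\[
M=\frac{1}{m}\begin{pmatrix}0 & C\\ C^{T} & 0\end{pmatrix},
\]
a real symmetric matrix.

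Finally I would read off the spectrum of $J$. Being symmetric and block-antidiagonal, $M$ has real eigenvalues occurring in pairs $\pm\tfrac1m\sigma_\ell(C)$, where the $\sigma_\ell(C)$ are the singular values of $C$; thus $J$ has the real eigenvalues $-1\pm\tfrac{\beta}{m}\sigma_\ell(C)$. The decisive point is that $C\neq 0$, which I would obtain from $\operatorname{tr}C=\sum_{s}G^1(s,s)-k\neq 0$, exactly the standing hypothesis; hence $\sigma_{\max}(C)>0$ and $\mu(\beta):=-1+\tfrac{\beta}{m}\sigma_{\max}(C)$ is an eigenvalue of $J$, and so of $\nabla\Psi(\overline x,\sigma(\overline x))$, with $\PRe(\mu(\beta))=\mu(\beta)>0$ once $\beta>m/\sigma_{\max}(C)$. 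I expect the only delicate step to be the identification of the lower block with $C^{T}$, i.e.\ the constancy of the column sums of $G^1$, which must be drawn from the symmetry of both the game and the rest point; this symmetry is essential, since for a non-symmetric $M$ the relevant eigenvalues could be purely imaginary, keeping $-I+\beta M$ in the closed left half-plane for every $\beta$.
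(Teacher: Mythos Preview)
Your argument is correct, and the overall skeleton matches the paper's: both reduce to the upper-left block $J=\nabla_x\Psi_x(\overline x,\sigma(\overline x))$ via \eqref{matriz}, write $J=-I+\beta M$ with $M$ block-antidiagonal, and use the hypothesis $\sum_sG^1(s,s)\neq k$ to force a nonzero block. The difference is in how the spectrum of $M$ is extracted. You correctly identify the $(2,1)$-block as $\tfrac{1}{m}C^{T}$ (deducing, as you flag, the constancy and value of the column sums from the existence of the symmetric rest point), obtain the real symmetric $M=\tfrac{1}{m}\bigl(\begin{smallmatrix}0&C\\ C^{T}&0\end{smallmatrix}\bigr)$, and read off the eigenvalues $\pm\tfrac{1}{m}\sigma_\ell(C)$; since $\operatorname{tr}C\neq 0$ gives $C\neq 0$, the positive branch yields $\mu(\beta)=-1+\tfrac{\beta}{m}\sigma_{\max}(C)>0$ for large $\beta$. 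The paper instead writes the two off-diagonal blocks as the \emph{same} matrix $\overline J$, then argues that each eigenvector $v$ of $\overline J$ yields an eigenvector $(v,v)$ of $M$, and combines this with $\operatorname{tr}M=0$ to locate an eigenvalue with positive real part. Your route is more robust: the $(v,v)$ eigenvector step genuinely requires both off-diagonals to coincide, i.e.\ $C=C^{T}$, which fails for the Rock--Scissors--Paper examples, whereas your singular-value argument handles the true structure $\bigl(\begin{smallmatrix}0&C\\ C^{T}&0\end{smallmatrix}\bigr)$ directly and even delivers the explicit threshold $\beta>m/\sigma_{\max}(C)$.
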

\begin{proof}
\noindent Again, let $J_{\beta}=\nabla_x \Psi_x(\overline x, \sigma(\overline x))$ be the upper-left block of the Jacobian matrix of $\Psi$, which is the only relevant part, evaluated at $(\overline x, \sigma(\overline x))$. The precise expression for the entries of $J_\beta$ is
\begin{equation*}
J^{is,jr}_\beta=\frac{\partial \Psi_x^{is}}{\partial x^{jr}}(\overline x, \sigma(\overline x))=
\begin{cases} 
-1, \quad & \text{    if }  i=j \text{ and } s = r \\
\phantom{-}0,& \text{    if } i=j \text{ and } s \neq r\\ 
\beta \frac{1}{|\overline S|}\big [ G^i(s,r)- \frac{k}{|\overline S|} \big ], & \text{    otherwise, }\\
\end{cases}
\end{equation*}
\noindent $i,j \in \{ 1,2\}$. Thus $J_\beta$ has the form 
\begin{equation*} \begin{pmatrix}
-I &\phantom{-}\overline J_{\beta}  \\ 
\phantom{-}\overline J_{\beta}& -I  \\
 \end{pmatrix},\end{equation*}  
\noindent with  $\overline J_{\beta} \in \RR^{|\overline S|} \times \RR^{|\overline S|}$. Observe that we can decompose $J_\beta$ as $J_{\beta}=\beta J -I$, where
\begin{equation*} J= \begin{pmatrix}
\mathbf 0 &\overline J  \\ 
\overline J & \mathbf 0  \\
 \end{pmatrix}.\end{equation*}  
\indent Let $\mu_1, \ldots, \mu_{|S|} \in \CC$ be the eigenvalues of $\overline J$ (counting multiplicity). Since we have assumed that $\sum_sG^1(s,s) \neq k$, the trace of $\overline J$ is not zero. Therefore, there exists some eigenvalue $\mu_k \, , k \in \{1, \ldots |\overline S|\}$, with a nonzero real part.  We have that, if $v$ is an eigenvector associated with $\mu_k$, then $\mu_k$ is an eigenvalue of $J$ with corresponding eigenvector $\mathbf u = (v,v) \in \RR^{| \overline S |}\times \RR^{| \overline  S|}$ since
\begin{equation*}
J\mathbf u =\begin{pmatrix}
\mathbf 0 &\overline J  \\ 
\overline J & \mathbf 0  \\
 \end{pmatrix}\begin{pmatrix}v \\ v  \end{pmatrix}= \begin{pmatrix}\overline J v \\ \overline J v  \end{pmatrix}= \mu_k \mathbf u.
 \end{equation*}
\noindent If $\PRe(\mu_k)>0$, the proof is finished. If  $\PRe(\mu_l)\leq 0$ for all $l \in \{1, \ldots |\overline S|\}$ then $\sum_l \PRe(\mu_l) < 0$. Also, the trace of  $J$ is zero and therefore there exists an eigenvalue $\mu$ of $J$ (which is not an eigenvalue of $\overline J$) such that $\PRe(\mu)>0$.\\
\indent Finally, observe that
\begin{equation}
\begin{aligned}
\deter{\beta J -\mu I}= \frac{1}{\beta^{| \overline  S|}}\deter{ J -\frac{\mu}{\beta} I},
\end{aligned} \label{jugo}
\end{equation}
\noindent and it is straightforward from \eqref{jugo} that $\overline \mu$ is an eigenvalue of the matrix $J_\beta$ if $\mu= (1+\overline \mu)/\beta$ is an eigenvalue of $\overline J$. Then $\overline \mu =\beta \mu -1$  whose real part is strictly positive for a sufficiently large $\beta$.
\end{proof}
\begin{proposition}\label{cero_proba}
There exists $\beta>0$ large enough and at least one rest point $(\overline x, \sigma(\overline x)) \in X \times \Delta$  of \eqref{dincont} such that, 
\begin{equation*}
\mathbb P((x_n,\lambda_n) \to (\overline x, \sigma(\overline x)))=0.
\end{equation*}
\end{proposition}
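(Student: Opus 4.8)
The plan is to deduce nonconvergence from the linear instability already in hand. Fix $\beta>0$ large enough that Lemma~\ref{parte_real} produces an eigenvalue $\mu$ of $\nabla\Psi(\overline x,\sigma(\overline x))$ with $\PRe(\mu)>0$, so that $(\overline x,\sigma(\overline x))$ is a linearly unstable rest point of \eqref{dincont}. As in the outline of Lemma~\ref{tecnico}, I would rewrite \eqref{APD} as the Robbins--Monro recursion
$$(x_{n+1},\lambda_{n+1})-(x_n,\lambda_n)=\frac{1}{n+1}\big(\Psi(x_n,\lambda_n)+\tilde\epsilon_{n+1}+b_{n+1}\big),$$
where $\escon{\tilde\epsilon_{n+1}}{\mathcal F_n}=0$, the sequence $(\tilde\epsilon_n)_n$ is almost surely bounded, and $b_n=O(1/n)$. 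The conclusion would then follow from the nonconvergence-to-unstable-equilibria theorem of Bena\"im~\cite[Theorem~9.1]{benaim99} (in the spirit of Pemantle and Brandi\`ere--Duflo), which states that such an algorithm with $\gamma_n=1/(n+1)$ converges to a linearly unstable equilibrium with probability zero, provided the martingale noise is \emph{exciting} along the unstable subspace.

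The task thus reduces to verifying this nondegeneracy condition near $(\overline x,\sigma(\overline x))$. Using the block-triangular form \eqref{matriz} of $\nabla\Psi$, the unstable eigenvalue $\mu$ must be an eigenvalue of the upper-left block $\nabla_x\Psi_x$; writing the corresponding eigenvector of the full matrix as $(v,w)$, one has $\nabla_x\Psi_x\,v=\mu v$ and $w=(\mu+1)^{-1}Lv$ (well defined since $\PRe(\mu)>0$ forces $\mu\neq-1$), so in particular the $x$-component $v$ is nonzero. It therefore suffices to bound from below, uniformly on a neighborhood, the conditional second moment of the projection of $\tilde\epsilon_{n+1}$ onto this unstable direction. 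The increment $U_{n+1}$ carries the conditional fluctuation of the realized payoff $g_{n+1}^i=G^i(s,s_{n+1}^{-i})$, which is genuinely random even at the rest point because, for the games considered (the Rock--Scissors--Paper family and \eqref{matri_j}), a player's payoff depends nontrivially on her opponents' moves; combined with the strict positivity \eqref{hip_sigma} of the decision rule, this makes the $x$-block of the conditional covariance of $\tilde\epsilon_{n+1}$ positive definite, and its pairing with $v$ is bounded below.

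The main obstacle is precisely this last verification: one must ensure that the excited noise directions are not orthogonal to the unstable subspace and that the lower bound on the conditional variance is \emph{uniform} on a neighborhood of $(\overline x,\sigma(\overline x))$, not only at the point itself. Here the symmetric, $\beta$-independent structure of $(\overline x,\sigma(\overline x))$ and the bound $\sigma^{is}\geq\xi>0$ coming from \eqref{hip_sigma} are essential, since they keep $\sigma^{is}$ bounded away from $0$ (and from $1$, given at least two strategies) throughout a neighborhood, so that the conditional covariance of $\tilde\epsilon_{n+1}$ stays uniformly nondegenerate there. Once the hypotheses of~\cite[Theorem~9.1]{benaim99} are checked in this way, the desired conclusion $\mathbb P((x_n,\lambda_n)\to(\overline x,\sigma(\overline x)))=0$ is immediate.
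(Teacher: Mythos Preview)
Your overall strategy coincides with the paper's: invoke a nonconvergence-to-unstable-equilibria theorem (you cite Bena\"im~\cite[Theorem~9.1]{benaim99}, the paper cites Brandi\`ere--Duflo~\cite[Theorem~1]{bd96}; these are essentially the same tool), and reduce everything to checking that the martingale noise is exciting along the unstable subspace. The structural analysis of the block-triangular Jacobian is also the same.

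Where you diverge is in \emph{which} block of the noise you use. You argue that the $x$-component $v$ of the unstable eigenvector is nonzero and then try to bound from below the conditional covariance of the payoff noise $U_{n+1}$ paired with $v$. The paper instead shows that the $\lambda$-component $v_2$ of the unstable eigenvector is nonzero and uses the \emph{frequency} noise $M_{n+1}$. This second route is considerably cleaner: the conditional variance of $M_{n+1}^{jr}$ is exactly $\sigma^{jr}(x_n^j)(1-\sigma^{jr}(x_n^j))+O(1/n)$, which is bounded away from zero solely by \eqref{hip_sigma}, with no appeal to the payoff structure of the game. Your approach, by contrast, needs the $x$-block of the conditional covariance of $\tilde\epsilon_{n+1}$ to be positive definite, and this is not automatic: it depends on the joint law of $(U_{n+1},M_{n+1})$, since the relevant quantity is the variance of the full projection $\langle U_{n+1},v_1\rangle+\langle M_{n+1},v_2\rangle$, and $U$ and $M$ are correlated through $s_{n+1}^i$. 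Your sentence ``this makes the $x$-block of the conditional covariance positive definite'' is exactly the step that needs a real computation, and you have not supplied one. It can likely be pushed through for the specific games at hand, but it is more work than the paper's choice, which sidesteps the issue entirely by exploiting the explicit Bernoulli structure of $M$.
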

\begin{proof} 
We can directly apply Brandi\`ere and Duflo~\cite[Theorem~1]{bd96}. The hypotheses of the theorem concerning the continuous dynamics and the step size of the discrete process \eqref{APD} are immediately satisfied. The only condition that needs to be verified if how powerfully the noise is projected in a repulsive direction at $(\overline x, \sigma(\overline x))$. Explicitly, it is sufficient to prove that
\begin{equation}
\liminf \limits_{n \to +\infty} \escon{||\epsilon_{n+1}^{pr}||^2}{\mathcal F_n}>0 \text{ a.s. on the event } \Gamma \big (\overline x, \sigma(\overline x)\big )=\{(x_n,\lambda_n) \to (\overline x, \sigma(\overline x)) \}, \label{cond_bd}
\end{equation}
\noindent since the noise term $\epsilon_n=(U_n,M_n)$ is almost surely bounded. Here, the upper-script $pr$ stands for the projection onto the repulsive subspace spanned by the eigenvectors associated with the eigenvalues with a positive real part.\\
\indent Fix $i \in \{1,2\}$, take $\beta$ large to have an eigenvalue $\mu$ of $\nabla \Psi(\overline x, \sigma(\overline x))$ such that $\PRe(\mu)>0$ and let $\mathbf v$ be a correspondent (possibly generalized) eigenvector. The vector $\mathbf v$ has the form $\mathbf v=(v_1 , v_2)$. Note that, necessarily, $v_2 \neq 0$ since, if $v_2=0$, then $v_1$ is a vector of ones, which is indeed an eigenvector for the upper-left block of $\nabla \Psi(\overline x, \sigma(\overline x))$ having -1 as the associated eigenvalue. So that
\begin{align*}
\escon{||\varepsilon_{n+1}^{pr}||^2}{\mathcal F_n}& \geq \escon{\norm{\langle \varepsilon_{n+1}, \mathbf v \rangle \mathbf v}^2}{\mathcal F_n}\geq c\escon{(M_{n+1}^{jr})^2}{\mathcal F_n},
\end{align*}
\noindent with $j=-i$ and for some $r \in \overline S$ and $c>0$. In view of \eqref{noise}, 
\begin{equation*}
\escon{(M_{n+1}^{jr})^2}{\mathcal F_n}= \escon{(\ind_{\{ s_{n+1}^j=r\}} - \sigma^{jr}(x_n^j))^2}{\mathcal F_n} +O \bigg (\frac{1}{n} \bigg ) 
= \sigma^{jr}(x_n^j)(1 - \sigma^{jr}(x_n^j))+O \bigg (\frac{1}{n}\bigg ).
\end{equation*}
\indent To conclude, take the $\liminf_n$ in the previous expression on the event $\Gamma (\overline x, \sigma(\overline x))$ to conclude that \eqref{cond_bd} holds, since $\sigma^{is}$ is bounded away from zero for every $i \in \{1,2 \}$ and $s \in \overline S$.
\end{proof}

\indent As observed by Pemantle~\cite{pemantle90}, nonconvergence results like the previous proposition are not very interesting if the set of unstable points is {\em too large}. The most useful consequences can be stated when this set is finite, as in our example \eqref{matri_j}; moreover, it is easy to check that $(\overline x, \sigma(\overline x))$ is the unique rest point of \eqref{dincont} for all $\beta>0$. The previous result shows that, for a large $\beta$, $(\overline x, \sigma(\overline x))$ has probability zero of being the limit of the process, while for small $\beta$ it is almost surely the limit. More precisely, we have that $\rho(\nabla\Psi(\overline x, \sigma(\overline x)))>0$ if $\beta>3$. Note that, since in this particular case the equilibrium point is known, we can show that $(\overline x, \sigma(\overline x))$ is stable if $2\eta\alpha = 2\beta <6$, i.e. if $\beta<3$. Therefore, by using Proposition~\ref{pos_proba}, we can fully characterize the behavior of the process in this case (except for the case where $\beta=3$). Simulations suggest that there is a cycle that attracts the trajectories and that the empirical frequencies of play still converge to $\sigma(\overline x)$, when $\beta$ is large (see Figure \ref{ciclo}).\\

\vspace{-7ex}
\begin{figure}[hbt!]
  \begin{center}
\includegraphics[scale=0.45]{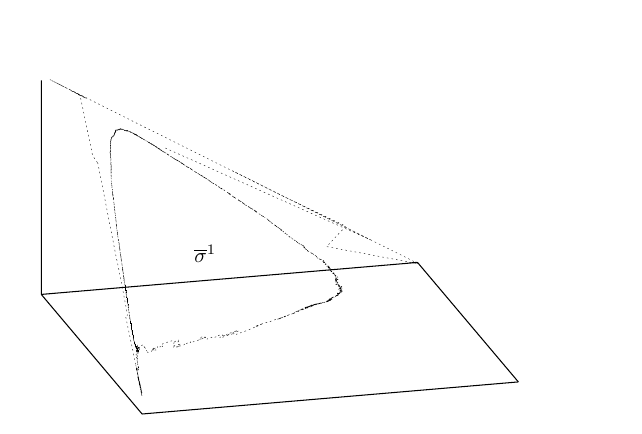}  
    \caption{The mixed action $\sigma_n^1$ of Player 1 when $\beta=4$.}
    \label{ciclo}
  \end{center}
\end{figure}
Finally, note that the same analysis will not work for a general class of games (for instance zero-sum games, as shown by the game given by Equation~\eqref{zero-sum}). Nevertheless, similar analysis can be applied to cases where the game has a unique equilibrium which is known to be unstable. See, for instance, \cite[Chapter 9]{sandholm11}, where this type of study is applied to some of the most well-known dynamics. 
\subsection*{Acknowledgements}
I am deeply indebted to Sylvain Sorin for bringing this problem to my attention and also Michel Bena\"im, Roberto Cominetti, and Mathieu Faure for very helpful discussions and comments.  The development of this project was partially funded by Fondecyt grant No. 3130732, the N\'ucleo Milenio Informaci\'on y Coordinaci\'on en Redes ICM/FIC RC130003 and  by the Complex Engineering Systems Institute (ICM: P-05-004-F, CONICYT: FBO16).
{
 \bibliographystyle{amsplain}
 \bibliography{biblio}
}
\end{document}